\documentclass [10pt]{article}

\usepackage[utf8]{inputenc}
\usepackage{fullpage}
\usepackage{amsmath, cite, amssymb, graphicx}

\newtheorem{lemma}{Lemma}
\newtheorem{theorem}{Theorem}

\newcommand{\qed}{\hfill\ensuremath{\Box}\medskip\\\noindent}
\newenvironment{proof}{\noindent\emph{Proof. }}{}

\newcommand{\depth}{\ensuremath{\mathrm{depth}}}

\newcommand{\LA}{\ensuremath{\mathrm{LA}}}
\newcommand{\NCA}{\ensuremath{\mathrm{NCA}}}
\newcommand{\LCE}{\ensuremath{\mathrm{LCE}}}
\newcommand{\LCEPP}{\ensuremath{\mathrm{LCE}_{\mathit{PP}}}}
\newcommand{\LCEPT}{\ensuremath{\mathrm{LCE}_{\mathit{PT}}}}
\newcommand{\LCETT}{\ensuremath{\mathrm{LCE}_{\mathit{TT}}}}

\newcommand{\polylog}{\text{\hspace{1mm}polylog}}

%hyperref stuff:
\usepackage[dvipsnames,usenames]{color}
\usepackage[colorlinks=true,urlcolor=Blue,citecolor=Green,linkcolor=BrickRed]{hyperref}
\urlstyle{same}

\begin{document}
\title{Longest Common Extensions in Trees\footnote{A preliminary version of this paper appeared in the Proceedings of the 26th Annual symposium on Combinatorial Pattern Matching, 2015.}}
\author{Philip Bille\thanks{Partially supported by the Danish Research Council (DFF – 4005-00267, DFF – 1323-00178) and the Advanced Technology Foundation.} \\  \href{mailto:phbi@dtu.dk}{phbi@dtu.dk} \and Pawe{\l} Gawrychowski\thanks{Work done while the author held a post-doctoral position
at the Warsaw Center of Mathematics and Computer Science.}\\ \href{mailto:gawry@mimuw.edu.pl}{gawry@mimuw.edu.pl} \and Inge Li G{\o}rtz$^{\ast}$\\
 \href{mailto:inge@dtu.dk}{inge@dtu.dk}    \and  
 Gad M. Landau\thanks{Partially supported by the National Science Foundation
Award 0904246, Israel Science Foundation grant  571/14,
Grant No. 2008217 from the United States-Israel
Binational Science Foundation (BSF) and DFG.} \\  \href{mailto:landau@cs.haifa.ac.il}{landau@cs.haifa.ac.il}   
\and Oren Weimann$^{\ast}$\thanks{Partially supported by  the Israel Science Foundation grant 794/13.} \\  \href{mailto:oren@cs.haifa.ac.il}{oren@cs.haifa.ac.il} }
\date{}

\maketitle

\begin{abstract}

The longest common extension (LCE) of two indices in a string is the length of the longest identical substrings starting at these two indices. The LCE problem asks to preprocess a string into a compact data structure that supports fast LCE queries.
 
In this paper we generalize the LCE problem to trees and suggest a few applications of LCE in trees to tries and XML databases. Given a labeled and rooted tree $T$ of size $n$, the goal is to preprocess $T$ into a compact data structure that support the following LCE queries between subpaths and subtrees in $T$. Let  $v_1$, $v_2$, $w_1$, and $w_2$ be nodes of $T$ such that $w_1$ and $w_2$ are descendants of $v_1$ and $v_2$ respectively.

\begin{itemize}
\item $\LCEPP(v_1, w_1, v_2, w_2)$: (path-path $\LCE$) return the longest common  prefix of the paths $v_1 \leadsto w_1$ and $v_2 \leadsto w_2$.
\item $\LCEPT(v_1, w_1, v_2)$: (path-tree $\LCE$) return maximal path-path LCE of the path $v_1 \leadsto w_1$ and any path from $v_2$ to a descendant leaf.
\item $\LCETT(v_1, v_2)$: (tree-tree $\LCE$) return a maximal path-path LCE of any pair of paths from $v_1$ and $v_2$ to  descendant leaves.
\end{itemize}
We present the first non-trivial bounds for supporting these queries. For $\LCEPP$ queries, we present a linear-space solution with $O(\log^{*} n)$ query time. For $\LCEPT$  queries, we present a linear-space solution with $O((\log\log n)^{2})$ query time, and complement this with a lower bound showing that any path-tree LCE structure of size $O(n \polylog(n))$ must necessarily
use $\Omega(\log\log n)$ time to answer queries.
For $\LCETT$ queries, we present a time-space trade-off, that given any parameter $\tau$, $1 \leq \tau \leq n$, leads to an $O(n\tau)$ space and $O(n/\tau)$ query-time solution (all of these bounds hold on a standard unit-cost RAM model with logarithmic word size). This is complemented with a reduction from the set intersection problem implying that a fast linear space solution is not likely to exist.  \\\\
{\bf Keywords.}
longest common prefix,
suffix tree of a tree,
pattern matching in trees.

\end{abstract}

\section{Introduction}
Given a string $S$, the \emph{longest common extension} (LCE) of two indices is the length of the longest identical substring starting at these indices. The \emph{longest common extension problem} (LCE problem) is to preprocess $S$ into a compact data structure supporting fast LCE queries. The LCE problem is a well-studied basic primitive~\cite{BGK2012,INL2010,BGSV2014, BGKLV2015, FH2006} with a wide range of applications in problems  such as approximate string matching, finding exact and approximate tandem repeats, and  finding palindromes~\cite{ALP2004,CH2002, LV1989, Myers1986,GS2004,LSS2001, LMS1998, ML1984}. The classic textbook solution to the LCE problem on strings combines a suffix tree with a nearest common ancestor (NCA) data structure leading to a linear space and constant query-time solution~\cite{Gusfield1997}.

In this paper we study generalizations of the LCE problem to trees. The goal is to preprocess an edge-labeled, rooted tree $T$ to support the various $\LCE$ queries between paths in $T$. Here a path starts at a node $v$ and ends at a descendant of $v$, and the LCEs are on the strings obtained by concatenating the characters on the edges of the path from top to bottom (each edge contains a single character). We consider path-path LCE queries between two specified paths in $T$, path-tree LCE queries defined as the maximal path-path LCE of a given path and \emph{any} path starting at a given node, and tree-tree LCE queries defined as the maximal path-path LCE between \emph{any} pair of paths starting from two given nodes.   We next define these problems formally. 
%For the path-path LCE and path-tree LCE problems we present linear space solutions  with log-logarithmic and logarithmic query time. For the tree-tree LCE problem, we show a reduction to the set intersection problem implying that a fast linear space solution not likely. We complement this with a time-space trade-off, which given any parameter $\tau$, $1 \leq \tau \leq n$, leads to an $O(n\tau)$ space and $O(n/\tau)$ query time solution. For instance, with $\tau =\sqrt{n}$ this implies and $O(n^{1.5})$ space and $\sqrt{n}$ query time solution.
%

\paragraph{\bf Tree LCE Problems.}
Let $T$ be an edge-labeled, rooted tree with $n$ nodes. 
We denote the subtree rooted at a node $v$ by $T(v)$, and  given nodes $v$ and $w$ such that $w$ is in $T(v)$ the path going down from $v$ to $w$ is denoted $v \leadsto w$. A \emph{path prefix} of $v \leadsto w$ is any subpath $v \leadsto u$ such that $u$ is on the path $v \leadsto w$. Two paths $v_1 \leadsto w_1$ and $v_2 \leadsto w_2$ \emph{match} if  concatenating the labels of all edges in the paths gives the same string. Given nodes  $v_1, w_1$ such that $w_1 \in T(v_1)$ and nodes $v_2, w_2$ such that  $w_2 \in T(v_2)$ define the following queries:
\begin{itemize}
\item $\LCEPP(v_1, w_1, v_2, w_2)$: (path-path $\LCE$) return the longest common matching prefix of the paths $v_1 \leadsto w_1$ and $v_2 \leadsto w_2$.
\item $\LCEPT(v_1, w_1, v_2)$: (path-tree $\LCE$) return the maximal path-path LCE of the path $v_1 \leadsto w_1$ and any path from $v_2$ to a descendant leaf.
\item $\LCETT(v_1, v_2)$: (tree-tree $\LCE$) return a maximal path-path LCE of any pair of paths from $v_1$ and $v_2$ to  descendant leaves.
\end{itemize}
The queries are illustrated in Fig.~\ref{fig:LCEexample}. We assume that the output of the queries is reported compactly as the endpoint(s) of the $\LCE$. This allows us to report the shared path in constant time. Furthermore, we will assume w.l.o.g. that for each node $v$ in $T$, all the edge-labels to children of  $v$ are distinct. If this is not the case, then we can merge all identical edges of a node to its children in linear time, without affecting the result of all the above $\LCE$ queries. 
%We define the \emph{path-path LCE problem}, \emph{path-tree LCE problem}, and \emph{tree-tree LCE problem} as the problem of preprocesssing $T$ into a compact data structure supporting fast $\LCEPP$, $\LCEPT$, and $\LCETT$ queries, respectively. 

\begin{figure}[t] %  figure placement: here, top, bottom, or page
   \centering
   \includegraphics[scale=.55]{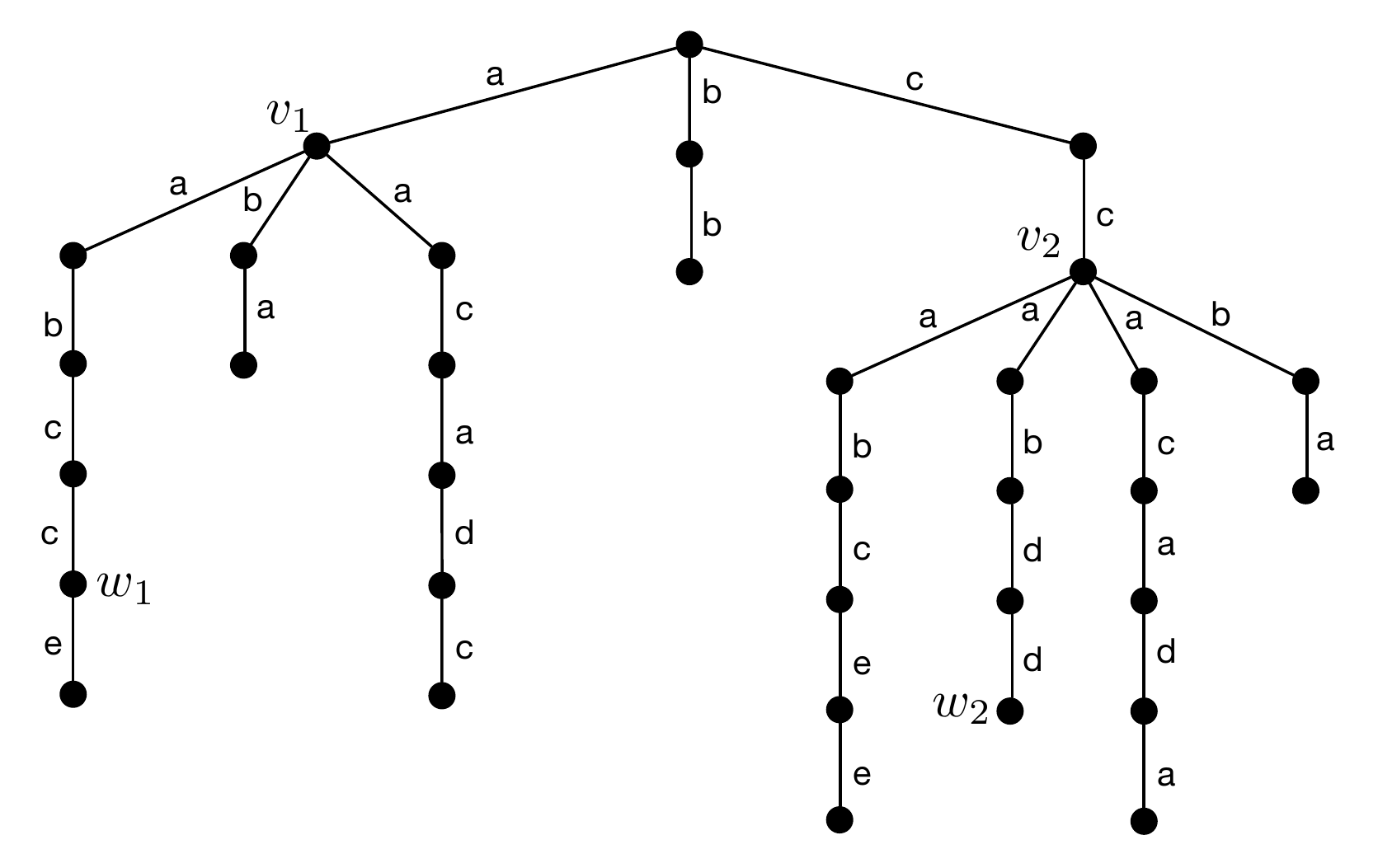} 
   \caption{LCE in trees. $\LCEPP(v_1, w_1, v_2, w_2)$ is the path ''ab'', $\LCEPT(v_1, w_1, v_2)$ is the path ''abc'', and $\LCETT(v_1, v_2)$ is the path ''acad''.}
   \label{fig:LCEexample}
\end{figure}
We note that the direction of the paths in $T$ is important for the LCE queries. In the above LCE queries, the paths start from a node and go downwards. If we instead consider paths from a node going upwards towards the root of $T$, the problem is easier and can be solved in linear space and constant query-time by combining Breslauer's suffix tree of a tree~\cite{Breslauer1998} with a nearest common ancestor (NCA) data structure~\cite{HT1984}.

%The path-tree LCEproblem  is actually identical
%to the LCP structure from the Cole, Lewenstein and Gottlieb paper,
%except that they work with a suffix tree (more or less), so maybe it
%makes sense to cite them. 

\paragraph{\bf Our Results.}
First consider the $\LCEPP$ and $\LCEPT$ problems. To answer an $\LCEPP(v_1, w_1, v_2, w_2)$ query, a straightforward solution is to traverse both paths in parallel-top down. Similarly, to answer an $\LCEPT(v_1, w_1, v_2)$ query we can traverse $v_1 \leadsto w_1$  top-down while traversing the matching path from $v_2$ (recall that all edges to a child are distinct and hence the longest matching path is unique). This approach leads to a linear-space  solution with $O(h)$ query-time  to both problems, where $h$ is the height of $T$. Note that for worst-case trees we have that $h = \Omega(n)$. 

We show the following results. For $\LCEPP$ we give a linear $O(n)$ space and $O(\log^{*} n)$ query-time solution. For  $\LCEPT$  we give a linear  $O(n)$ space and $O((\log\log n)^{2})$ query-time solution, and complement this with a lower bound stating that any  $\LCEPT$ solution using $O(n \polylog(n))$ space must necessarily
have $\Omega(\log\log n)$ query time.
 
%TODO: our techniques

Next consider the $\LCETT$ problem. Here, the simple top down traversal does not  work and it seems that substantially different ideas are needed. We first show a reduction from the \emph{set-intersection problem}, i.e., preprocessing a family of sets of total size $n$ to support disjointness queries between any pairs of sets. In particular, the reduction implies that a fast linear space solution is not likely assuming a widely believed conjecture on the complexity of the set-intersection problem. We complement this result with a time-space trade-off that achieves $O(n\tau)$ space and $O(n/\tau)$ query time for any parameter $1 \leq \tau \leq n$.

All results assume the standard word RAM model with word size $\Theta(\log n)$. We also assume the alphabet is either sorted or is linear-time sortable.

\paragraph{\bf Applications.}
We suggest a few immediate applications of LCE in trees. Consider a set of strings $\mathcal{S} = \{S_1, \ldots, S_k\}$ of total length $\sum_{i = 1}^k |S_i| = N$ and let $T$ be the \emph{trie} of $\mathcal{S}$ of size $n$, i.e., $T$ is the labeled, rooted tree obtained by merging shared prefixes in $\mathcal{S}$ maximally. If we want to support $\LCE$ queries between suffixes of strings in $\mathcal{S}$, the standard approach is to build a generalized suffix tree for the strings and combine it with an NCA data structure. This leads to a solution using $O(N)$ space and $O(1)$ query time. We can instead implement the $\LCE$ query between the suffixes of strings in $\mathcal{S}$ as an $\LCEPP$ on the trie $T$, i.e., any suffix of a string in $\mathcal{S}$ corresponds to a path in $T$ and hence the $\LCE$ of two suffixes in $\mathcal{S}$ corresponds to an $\LCEPP$ query in $T$. With our data structure for  $\LCEPP$, this leads to  a solution using $O(n)$ space and $O(\log^{*} n)$ query time. In general, $n$ can be significantly smaller than $N$, depending on the amount of shared prefixes in $\mathcal{S}$.  Hence, this solution provides a more space-efficient representation of $\mathcal{S}$ at the expense of a tiny increase in query time. An $\LCEPT$ query on $T$ corresponds to computing a maximal $\LCE$ of a suffix of a string in $\mathcal{S}$ with suffixes of strings in $\mathcal{S}$ sharing a common prefix. An $\LCETT$ query on $T$ corresponds to computing a maximal $\LCE$ over pairs of suffixes of strings in $\mathcal{S}$ that share a common prefix. To the best of our knowledge these queries are novel one-to-many and many-to-many $\LCE$ queries. Since tries are a basic data structure for storing strings we expect these queries to be of interest in a number of applications.

Another interesting application is using $\LCE$ in trees as a query primitive for XML data. XML documents can be viewed as a labeled tree and typical queries (e.g., XPath queries) involve traversing and identifying paths in the tree. The $\LCE$ queries provide simple and natural primitives for comparing paths and subtrees without explicit traversal. For instance, our  solution for $\LCEPT$ queries can be used to quickly identify the ``best match'' of a given path in a subtree.

\section{Preliminaries} 

Given a node $v$ and an integer $d \ge 0$, the \emph{level ancestor} of $v$ at depth $d$, denoted $\LA(v, d)$ is the ancestor of $v$ at depth $d$.
We explicitly compute and store the depth of every node $v$, denoted $\depth(v)$.
Given a pair of nodes $v$ and $w$ the \emph{nearest common ancestor} of $v$ and $w$, denoted $\NCA(v, w)$, is the common ancestor of $v$ and $w$ of greatest depth. Both $\LA$ and $\NCA$ queries can be supported in constant time with a linear space data structures, see e.g.,~\cite{BV1994, Dietz1991, BFC2004, AH2000, GRV2006, AGKR2004,HT1984, BFC2000, FH2011}

Finally, the \emph{suffix tree of a tree}~\cite{Kosaraju1989, Breslauer1998, shibuya1999} is the compressed trie of all suffixes of leaf-to-root paths in $T$. The suffix tree of a tree uses $O(n)$ space and  can be constructed  in $O(n\log \log n)$ time for general alphabets~\cite{shibuya1999}. Note that the suffix tree of a tree combined with \NCA\ can support \LCE\ queries in constant time for paths going upwards. Since we consider paths going downwards, we will only use the suffix tree to check (in constant time) if two paths are completely identical.

We also need the following three primitives. {\em Range minimum queries:} A list of $n$ numbers $a_{1},a_{2},\ldots a_{n}$ can be augmented with $2n+o(n)$ bits of additional
data in $O(n)$ time, so that for any $i\leq j$ the position of the smallest number among $a_{i},a_{i+1},\ldots,a_{j}$ can be found in $O(1)$
time~\cite{FH2011}.  {\em Predecessor queries:} Given a sorted collection of $n$ integers from $[0,U)$, a structure of size $O(n)$ answering predecessor queries in
$O(\log\log U)$ time can be constructed in time $O(n)$~\cite{BKZ1977}, where a predecessor query locates, for a given $x$, the largest $y\leq x$
such that $y\in S$. Finally,  {\em Perfect hashing:}  given a collection $S$ of $n$ integers a perfect hash table can be constructed in expected $O(n)$ time~\cite{FKS},
where a perfect hash table checks, for a given $x$, if $x\in S$, and if so returns its associated data in $O(1)$ time. The last result can be
made deterministic at the expense of increasing the preprocessing time to $O(n\log\log n)$~\cite{Ruzic2004}, but then we need one additional step
in our solution for the path-tree LCE as to ensure $O(n)$ total construction time.

\section{Difference Covers for Trees}
In this section we introduce a generalization of difference covers from strings to trees. This will be used to decrease the space of our data structures. We believe it is of independent interest.

\begin{lemma}\label{lem:difference}
For any tree $T$ with $n$ nodes and a parameter $x$, it is possible to mark $2n/x$ nodes of $T$, so that for any two nodes $u,v\in T$
at (possibly different) depths at least $x^{2}$, there exists $d\leq x^{2}$ such that the $d$-th ancestors of both $u$ and $v$
are marked. Furthermore, such $d$ can be calculated in $O(1)$ time and the set of marked nodes can be determined in $O(n)$ time.
\end{lemma}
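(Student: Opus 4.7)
The plan is to lift the difference-cover trick used for strings to trees. I work with the explicit cover $D = \{0, 1, \ldots, x-1\} \cup \{x, 2x, \ldots, (x-1)x\} \subseteq \{0, 1, \ldots, x^2-1\}$, which has $|D| = 2x-1$. A short case analysis on $\delta = qx + r$ with $0 \leq r < x$ and $0 \leq q < x$ shows that every $\delta \in [0, x^2)$ satisfies $\delta \equiv r_1 - r_2 \pmod{x^2}$ for some $r_1, r_2 \in D$: for $q \leq x-2$ take $r_1 = (q+1)x$ and $r_2 = x - r$, and for $q = x-1$ take $r_1 = r$ and $r_2 = x$. For a shift $s \in [0, x^2)$ chosen below, a node $v$ is marked iff $\depth(v) \bmod x^2$ lies in $D + s \pmod{x^2}$.

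Given two nodes $u, v$ of depth at least $x^2$, I compute $\delta = (\depth(u) - \depth(v)) \bmod x^2$, look up a pair $(r_1, r_2) \in D \times D$ with $r_1 - r_2 \equiv \delta$, and return $d = (\depth(u) - r_1 - s) \bmod x^2$. Then $0 \leq d < x^2 \leq \min(\depth(u), \depth(v))$, and the $d$-th ancestors of $u$ and $v$ have depths congruent to $r_1 + s$ and $r_2 + s$ modulo $x^2$ respectively, so both are marked. A lookup table with $x^2$ entries storing a valid pair for each $\delta$ is built in $O(|D|^2) = O(x^2)$ time by enumerating $D \times D$, so each query costs $O(1)$.

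What remains is to pick a shift $s$ whose marking has size at most $2n/x$. Let $h[r]$ denote the number of nodes of depth $\equiv r \pmod{x^2}$, computed in $O(n)$ time by one traversal, and set $N_s = \sum_{d \in D} h[(d+s) \bmod x^2]$. Summing over $s$ gives $\sum_s N_s = n |D| = n(2x-1)$, so by averaging some $s^*$ achieves $N_{s^*} < 2n/x$. The main obstacle is locating such a shift in $O(n)$ total, since naively evaluating every $N_s$ costs $\Theta(|D| x^2) = \Theta(x^3)$. I exploit the two-block structure of $D$: the contribution of $\{0, 1, \ldots, x-1\}$ to $N_s$ is a length-$x$ cyclic window sum over $h$, maintainable in $O(1)$ per shift; and the contribution of $\{x, 2x, \ldots, (x-1)x\}$ equals $H[s \bmod x] - h[s]$, where $H[a] = \sum_{b=0}^{x-1} h[bx + a]$ is precomputed in $O(x^2)$ time. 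All $N_s$, and hence an optimal $s^*$, are then obtained in $O(n + x^2)$ time. Since nodes of depth $\geq x^2$ exist only when $x^2 \leq n$ (otherwise the lemma's hypothesis is vacuous and we mark nothing), the total construction time is $O(n)$.
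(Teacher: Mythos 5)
Your proof is correct and rests on the same fundamental mechanism as the paper's---marking is a function of $\depth(v) \bmod x^2$, and the set of marked residues forms a difference cover of $\{0,\ldots,x^2-1\}$ of size $O(x)$---but the parametrization and the shift-selection algorithm are genuinely different. The paper realizes the cover implicitly as the union of an arithmetic progression of step $x$ (Type I, anchored at $r_1$) and a length-$x$ block (Type II, anchored at $r_2 x$), and chooses the two anchors $r_1, r_2$ \emph{independently}; each independent pigeonhole immediately gives at most $n/x$ nodes per type, and $d$ is computed in two stages ($d_1$ to align $u$ with Type I, then $d_2 x$ to align the shifted $v$ with Type II). You instead fix the explicit cover $D = \{0,\ldots,x-1\} \cup \{x,2x,\ldots,(x-1)x\}$ and apply a \emph{single} global shift $s$. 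Coupling the two blocks is what forces the extra work: the averaging $\sum_s N_s = (2x-1)n$ shows a good $s^*$ exists, but locating it in $O(n)$ needs your sliding-window recurrence for the interval part and the precomputed column sums $H[\cdot]$ for the progression part. The paper sidesteps all of that by decoupling the shifts, so its route is somewhat simpler; your route is arguably cleaner conceptually because it makes the difference-cover structure explicit. Two minor remarks. First, your $x^2$-entry lookup table for the pair $(r_1,r_2)$ is unnecessary---your own case analysis on $\delta = qx+r$ already gives closed-form formulas computable in $O(1)$, just as the paper computes $d_1,d_2$ arithmetically; the table is harmless (it fits in $O(n)$ once you observe $x^2 \le n$) but superfluous. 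Second, your construction preserves the property the paper records in the remark after the lemma (marking depends only on $\depth(v) \bmod x^2$), which the later sections rely on, so the proposal is compatible with the rest of the paper.
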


\begin{proof}
We distinguish between two types of marked nodes. Whether a node $v$ is marked or not depends only on its depth.
The marked nodes are determined as follows.

\begin{description}
\item[Type I.] For every $i=0,1,\ldots,x-1$, let $V_{i}$ be the set of nodes at depth leaving a remainder of $i$ when divided by $x$. Because
$\bigcup_{i}V_{i}=T$ and all $V_{i}'s$ are disjoint, there exists $r_{1}\in [0,x-1]$ such that $|V_{r_{1}}|\leq n/x$. Then $v$ is a type I marked node
iff $\depth(v) = r_{1}  \bmod x$.
\item[Type II.]  For every $i=0,1,\ldots,x-1$, let $V_{i}$ be the set of nodes $v$ such that $\lfloor \depth(v)/x \rfloor$ leaves a remainder
of $i$ when divided by $x$. By the same argument as above, there exists $r_{2}\in [0,x-1]$ such that $|V_{r_{2}}|\leq n/x$. Then $v$ is a type II marked
node iff $\lfloor \depth(v)/x \rfloor = r_{2}  \bmod x$.
\end{description}

Now, given two nodes $u$ and $v$ at depths at least $x^{2}$, we need to show that there exists an appropriate $d\leq x^{2}$. Let
$\depth(u)=t_{1} \bmod x$ and choose $d_{1}=t_{1}+x-r_{1}$. Then the $d_{1}$-th ancestor of $u$ is a type I marked node, because
its depth is $\depth(u)-d_{1}=\depth(u)-(t_{1}+x-r_{1})=\depth(u)-t_{1}-x+r_{1}$, which leaves a remainder of $r_{1}$ when divided by $x$.
Our $d$ will be of the form $d_{1}+d_{2}x$. Observe that regardless of the value of $d_{2}$, we can be sure that the $d$-th ancestor
of $u$ is a type I marked node. Let $v'$ be the $d_{1}$-th ancestor of $v$, $\lfloor \depth(v')/x \rfloor = t_{2} \bmod x$ and choose
$d_{2}=t_{2}+x-r_{2}$. The $(d_{2}x)$-th ancestor of $v'$ is a type II marked node, because
$\lfloor (\depth(v')-d_{2}x)/x \rfloor = \lfloor \depth(v')/x\rfloor -t_{2}-x+r_{2}$, which leaves a remainder of $r_{2}$ when divided by $x$.
Therefore, choosing $d=d_{1}+d_{2}x$ guarantees that $d\leq x-1+x(x-1)<x^{2}$, so the $d$-th ancestors of $u$ and $v$ are both
defined, the $d$-th ancestor of $u$ is a type I marked node, and the $d$-th ancestor of $v$ is a type II marked node.

The total number of marked nodes is clearly at most $2n/x$, and the values of $r$ and $r'$ can be determined by a single traversal of $T$.
To determine $d$, we only need to additionally know $\depth(u)$ and $\depth(v)$ and perform a few simple arithmetical operations.
\qed
\end{proof}

\noindent {\bf Remark.} Our difference cover has the following useful property: whether a node $v$ is marked or not depends only on
the value of $\depth(v) \pmod{x^{2}}$. Hence, if a node at depth at least $x^{2}$ is marked then so is its $(x^{2})$-th ancestor. Similarly, if
a node is marked, so are all of its descendants at distance $x^{2}$.

\section{Path-Path LCE} 
\label{sec:path-path}

In this section we prove the following theorem.

\begin{theorem}\label{thm:lcepp}
For a tree $T$ with $n$ nodes, a data structure of size $O(n)$ can be constructed in $O(n)$ time to answer path-path LCE queries in
$O(\log^{*} n)$ time. 
\end{theorem}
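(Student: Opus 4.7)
The plan combines a constant-time predicate ``is $\LCEPP \geq \ell$?'' with a multi-level application of Lemma~\ref{lem:difference} that shrinks the search for the LCE length logarithmically per level, yielding $O(\log^{*} n)$ levels total.

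For the predicate, the length-$\ell$ downward prefixes of $v_1\leadsto w_1$ and $v_2\leadsto w_2$ coincide iff the length-$\ell$ upward paths starting at $a_i:=\LA(w_i,\depth(v_i)+\ell)$ coincide, one being the reverse of the other. Using a constant-time level ancestor structure together with a constant-time upward-path equality test---either from the suffix tree of $T$ augmented with NCA, or, to respect the $O(n)$ construction bound, from Karp--Rabin-style fingerprints of the leaf-to-root labels of $T$ precomputed in linear time---both the $a_i$ and the equality test take $O(1)$, so the predicate is $O(1)$. A naive binary search on $\ell\in[0,\min(\depth(w_1)-\depth(v_1),\depth(w_2)-\depth(v_2))]$ already gives $O(\log n)$ query time.

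To cut the query time to $O(\log^{*} n)$, I would build a hierarchy of progressively sparser subsets $N_0\supseteq N_1\supseteq\cdots$ obtained by iterated application of Lemma~\ref{lem:difference}, and maintain the constant-time predicate restricted to each level. The parameters are tuned so that $n_i:=|N_i|$ shrinks quickly enough (like $n,\log n,\log\log n,\ldots$) that the recursion terminates in $O(\log^{*} n)$ levels with $\sum_i n_i=O(n)$; thus the total space and, using the linear-time-sortable alphabet assumption, the total construction time are $O(n)$. A query then proceeds top-down: at each level, Lemma~\ref{lem:difference} aligns the current pair of candidate ancestors to marked nodes at a common relative offset in $O(1)$ time, and a single invocation of the predicate at that alignment narrows the feasible range for the LCE length enough to hand off to the next (logarithmically smaller) level. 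After $O(\log^{*} n)$ levels the range is constant and the query is resolved by the local predicate.

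The main obstacle will be orchestrating the level-to-level reduction so that the slack of size $O(x_i^2)$ introduced by Lemma~\ref{lem:difference} at level~$i$ is genuinely absorbed by the next level's structure, and so that the boundary case where one of $v_1,v_2$ lies at depth less than $x_i^2$---in which Lemma~\ref{lem:difference} does not directly apply---can be dispatched in $O(1)$ by the top-level predicate or a small precomputed table. Once the per-level reduction is correct, the $O(n)$ space bound, $O(n)$ construction time, and $O(\log^{*} n)$ query time all follow by summing or telescoping the per-level contributions over the hierarchy.
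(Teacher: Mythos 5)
Your plan has the right shape --- the $O(1)$ upward-path equality test via the suffix tree of $T$ (or fingerprints), the use of Lemma~\ref{lem:difference} to align paths to marked nodes, and a hierarchy of $O(\log^{*}n)$ levels --- but the central query step has a genuine gap that the paper closes with machinery your proposal omits. You write that ``a single invocation of the predicate at that alignment narrows the feasible range for the LCE length enough to hand off to the next (logarithmically smaller) level.'' A predicate ``is $\LCEPP\geq\ell$?'' returns one bit; it cannot shrink a range of size $b$ to one of size $\mathrm{polylog}(b)$ in a constant number of calls, only halve it, which gives you binary search and $O(\log n)$ time, not $O(\log^{*}n)$. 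What the paper's Lemma~\ref{lem:path-path reduction} actually stores at each level is not a predicate but, for each canonical length $2^{k}\log^{2}b$ and each marked node $u$, the rank of the path of that length ending at $u$ on a lexicographically sorted list, together with LCE values between adjacent list entries and an RMQ structure over them. That lets the query read off the \emph{exact} LCE of two canonical paths in $O(1)$ --- it is this $O(1)$ LCE lookup for a logarithmic set of canonical lengths, plus the difference-cover alignment, that produces the ``length $b \to \log^{2}b$'' reduction in $O(1)$ time. Your proposal never supplies an equivalent of this sorted-lists-plus-RMQ ingredient, so the per-level reduction does not go through.

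A second problem is your space accounting. You propose making $n_{i}=|N_{i}|$ shrink like $n,\log n,\log\log n,\ldots$ so that $\sum_{i} n_{i}=O(n)$. Lemma~\ref{lem:difference} marks $\Theta(n/x)$ nodes for parameter $x$ and only applies to nodes at depth at least $x^{2}$; to get down to $\log n$ marked nodes you would need $x=\Theta(n/\log n)$, forcing $x^{2}=\Omega(n^{2}/\log^{2}n)\gg n$, so the lemma is vacuous and no node can be aligned. The paper instead keeps each level at $O(n)$ space (there are $O(n/\log b)$ marked nodes but $\Theta(\log b)$ canonical lengths, so the product is $O(n)$), obtaining $O(n\log^{*}n)$ total, and then removes the extra $\log^{*}n$ factor with a different device: it contracts $T$ to a tree $T'$ on $O(n/\log^{*}n)$ nodes whose edges are paths of length $(\log^{*}n)^{2}$ in $T$, runs the whole hierarchy on $T'$, and finishes residual queries of length $O((\log^{*}n)^{2})$ by scanning in blocks of $\log^{*}n$ named edges in $O(\log^{*}n)$ time. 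That contraction step, together with the RMQ-over-sorted-canonical-paths structure, is what you would need to add to make the argument correct.
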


We start with a simple preliminary $O(n\log n)$-space $O(1)$-query data structure which will serve as a starting point for the more complicated final implementation. We note that a data structure with similar guarantees to Lemma~\ref{lem:simple path-path} is also implied from~\cite{Bannai2013}.

\begin{lemma}
For a tree $T$ with $n$ nodes, a data structure of size $O(n\log n)$ can be constructed in $O(n\log n)$ time to answer path-path LCE queries in
$O(1)$ time.
\label{lem:simple path-path}
\end{lemma}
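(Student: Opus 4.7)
The plan is to adapt the Karp--Miller--Rosenberg doubling construction from strings to downward paths in $T$. For every node $v$ and every $i\in\{0,1,\ldots,\lceil\log n\rceil\}$ I precompute the $2^i$-th ancestor $a_i(v)$, stored in the standard $O(n\log n)$-size sparse table that answers power-of-two level-ancestor queries in $O(1)$ time. Concurrently I build an integer name $\mathrm{nm}_i(v)$ with the invariant that $\mathrm{nm}_i(u)=\mathrm{nm}_i(v)$ iff the length-$2^i$ downward paths terminating at $u$ and at $v$ spell the same string: take $\mathrm{nm}_0(v)$ to be the label of the edge from $\mathrm{parent}(v)$ to $v$, and define $\mathrm{nm}_i(v)$ to be the image under a perfect hash of the pair $\bigl(\mathrm{nm}_{i-1}(a_{i-1}(v)),\mathrm{nm}_{i-1}(v)\bigr)$. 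Each of the $O(\log n)$ levels hashes $n$ pairs, so construction runs in $O(n\log n)$ time and space.

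Given a query $\LCEPP(v_1,w_1,v_2,w_2)$, set $L_j=\depth(w_j)-\depth(v_j)$ and $L=\min(L_1,L_2)$. The length-$k$ prefix of $v_j\leadsto w_j$ ends at $u_j(k):=\LA(w_j,\depth(v_j)+k)$, obtained in $O(1)$ via the sparse table, and the two length-$2^i$ prefixes of $P_1$ and $P_2$ match iff $\mathrm{nm}_i(u_1(2^i))=\mathrm{nm}_i(u_2(2^i))$, an $O(1)$ test. Exponential descent on $i$---iterating $i$ from $\lceil\log L\rceil$ downward and greedily extending the matched prefix by $2^i$ whenever the extension still matches---already computes the exact LCE in $O(\log n)$ time.

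Reducing the query to $O(1)$ is the main obstacle. Because the equivalences $\{u\sim_i v : \mathrm{nm}_i(u)=\mathrm{nm}_i(v)\}$ refine monotonically as $i$ grows, the names organize the nodes into a laminar hierarchy encodable by a forest on $O(n\log n)$ elements. My plan is to equip this hierarchy with a constant-time NCA oracle so that a single NCA query returns the largest $i$ for which the two length-$2^i$ prefixes coincide, after which one further level-ancestor lookup together with one name equality test pins down the exact LCE. The delicate point is that the pair of nodes being compared at level $i$, namely $u_1(2^i)$ and $u_2(2^i)$, itself depends on $i$, so one has to be careful about which leaves of the hierarchy feed the NCA query and how to stitch adjacent levels together once the matched prefix has been partially assembled; this is exactly the ingredient shared with the construction underlying~\cite{Bannai2013}.
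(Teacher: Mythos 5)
Your doubling/Karp--Miller--Rosenberg skeleton is in the same spirit as the paper's proof, and the $O(\log n)$-time query via greedy exponential descent with the names $\mathrm{nm}_i$ is sound. However, the step you flag yourself as ``delicate'' is in fact a genuine gap, and the fix you gesture at does not work as stated.

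The laminar hierarchy you describe is indexed by the nodes of $T$: at level $i$, two nodes $u,v$ are grouped together iff $\mathrm{nm}_i(u)=\mathrm{nm}_i(v)$, i.e.\ the length-$2^i$ downward paths \emph{ending} at $u$ and $v$ coincide. An NCA query on $w_1$ and $w_2$ in that forest therefore returns the largest $i$ for which the length-$2^i$ paths ending at $w_1$ and $w_2$ agree --- a statement about common \emph{suffixes} (read top-down) of $v_1\leadsto w_1$ and $v_2\leadsto w_2$. But $\LCEPP$ asks for the longest common \emph{prefix}, which is governed by names evaluated at the intermediate nodes $u_j(2^i)=\LA(w_j,\depth(v_j)+2^i)$; these nodes change with $i$, so there is no fixed pair of leaves you can feed to a single NCA query. (Upward LCE is indeed solvable by a single NCA in a suffix-tree-like structure, as the paper notes in the introduction; downward LCE is precisely the case where this fails.)

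The paper closes this gap differently. It first reduces a general query, in $O(1)$ time, to a query on two paths of the \emph{same fixed length} $2^k$ (truncate to the shorter length, then either keep the length-$2^k$ prefixes if they differ, or the length-$2^k$ suffixes if they agree). For each $k$ separately, all length-$2^k$ paths in $T$ (at most $n$, one ending at each sufficiently deep node) are lexicographically sorted and the LCPs of consecutive entries are stored in an array equipped with RMQ --- a per-length ``suffix array.'' A fixed-length query is then two position lookups and one RMQ. This sidesteps the moving-target problem entirely: once the length is pinned to $2^k$, the two objects being compared are fixed paths, so the standard sorted-list-plus-RMQ machinery applies. To repair your approach you would effectively have to rebuild this per-$k$ sorted structure; the single cross-level NCA oracle you propose does not substitute for it.
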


\begin{proof}
The structure consists of $\log n$ separate parts, each of size $O(n)$. The $k$-th part answers in $O(1)$ time path-path LCE queries such that both
paths are of the same length $2^{k}$. This is enough to answer a general path-path LCE query in the same time complexity, because we can first
truncate the longer path so that both paths are of the same length $\ell$, then calculate $k$ such that $2^{k}\leq \ell < 2^{k+1}$. Then
we have two cases.

\begin{enumerate}
\item The prefixes of length $2^{k}$ of both paths are different. Then replacing the paths by their prefixes of length $2^{k}$ does not
change the answer.
\item The prefixes of length $2^{k}$ of both paths are the same. Then replacing the paths by their suffixes of length $2^{k}$ does not
change the answer.
\end{enumerate}

We can check if the prefixes are the same and then (with level ancestor queries) reduce the query so that both paths are
of the same length $2^{k}$, all in $O(1)$ time.

Consider all paths of length $2^{k}$ in the tree. There are at most $n$ of them, because every node $u$ creates at most one new path
$\LA(v,\depth(v)-2^{k}) \leadsto v$. We lexicographically sort all such paths and store the longest common extension of every two neighbours
on the sorted list. Additionally, we augment the longest common extensiones with a range minimum query structure, and keep at every $v$ the
position of the path $\LA(v,\depth(v)-2^{k}) \leadsto v$ (if any) on the sorted list. This allows us to answer
$\LCEPP(\LA(u,\depth(u)-2^{k}),u,\LA(v,\depth(v)-2^{k}),v)$ in $O(1)$ time: we lookup the positions of $\LA(u,\depth(u)-2^{k}) \leadsto u$
and $\LA(v,\depth(v)-2^{k}) \leadsto v$ on the sorted list and use the range minimum query structure to calculate their longest common
prefix, all in $O(1)$ time. The total space usage is $O(n)$, because every node stores one number and additionally we have a list of at most
$n$ numbers augmented with a range minimum query structure.

To construct the structure efficiently, we need to lexicographically sort all paths of length $k$. This can be done in $O(n)$ time for every $k$ after
observing that every path of length $2^{k+1}$ can be conceptually divided into two paths of length $2^{k}$. Therefore, if we have already
lexicographically sorted all paths of length $2^{k}$, we can lexicographically sort all paths of length $2^{k+1}$ by sorting pairs of numbers
from $[1,n]$, which are the positions of the prefix and the suffix of a longer path on the sorted list of all paths of length $2^{k}$. With
radix sorting, this takes $O(n)$ time. Then we need to compute the longest common extension of ever two neighbours on the sorted list,
which can be done in $O(1)$ time by using the already constructed structure for paths of length $2^{k}$. Consequently, the total construction
time is $O(n\log n)$.
\qed
\end{proof}

To decrease the space usage of the structure from Lemma~\ref{lem:simple path-path}, we use the difference covers developed in
Lemma~\ref{lem:difference}. Intuitively, the first step is to apply the lemma with $x=\log n$ and preprocess only paths of length $2^{k}\log^{2} n$
ending at the marked nodes. Because we have only $O(n/\log n)$ marked nodes, this requires $O(n)$ space. Then, given two paths of length
$\ell$, we can either immediately return their LCE using the preprocessed data, or reduce the query to computing
the LCE of two paths of length at most $\log^{2}n$. Using the same reasoning again with $x=\log (\log^{2}n)$,
we can reduce the length even further to at most $\log^{2}(\log^{2}n)$ and so on. After $O(\log^{*}n)$ such reduction steps, we guarantee that the
paths are of length $O(1)$, and the answer can be found naively. Formally, every step is implemented using the following lemma.

\begin{lemma}
For a tree $T$ with $n$ nodes and a parameter $b$, a data structure of size $O(n)$ can be constructed in $O(n)$ time, so that given two paths of
length at most $b$ ending at $u\in T$ and $v\in T$ in $O(1)$ time we can either compute the path-path LCE or reduce the query so that
the paths are of length at most $\log^{2}b$.
\label{lem:path-path reduction}
\end{lemma}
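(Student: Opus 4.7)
The plan is to keep the $O(\log b)$ doubling layers of Lemma~\ref{lem:simple path-path} but, at each layer, preprocess only paths whose bottom endpoint lies in a small ``marked'' set obtained from Lemma~\ref{lem:difference}, and at query time align arbitrary endpoints to the marked set in $O(1)$.

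Concretely, I apply Lemma~\ref{lem:difference} with $x=\log b$ to mark at most $2n/\log b$ nodes. For each $k=0,1,\ldots,\lceil\log(b/\log^{2}b)\rceil$ I replay the $k$-th component of the Lemma~\ref{lem:simple path-path} structure, restricted to the $O(n/\log b)$ paths of length $2^{k}\log^{2}b$ with marked lower endpoint. There are $O(\log b)$ such levels contributing $O(n/\log b)$ paths each, so the total space and construction time are $O(n)$; the incremental radix-sort of Lemma~\ref{lem:simple path-path} carries over to the restricted sets unchanged.

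Given a query on paths ending at $u$ and $v$ of length at most $b$, I first truncate to the common length $\ell$. If $\ell\leq\log^{2}b$ the query is already reduced and I return it. Otherwise I invoke Lemma~\ref{lem:difference} to compute in $O(1)$ time an offset $d\leq\log^{2}b$ for which $u':=\LA(u,\depth(u)-d)$ and $v':=\LA(v,\depth(v)-d)$ are both marked, and run the Lemma~\ref{lem:simple path-path}-style doubling on the aligned paths of length $\ell-d$ ending at $u',v'$, choosing $k$ to be the largest integer with $2^{k}\log^{2}b\leq\ell-d$. The remark following Lemma~\ref{lem:difference} guarantees that every $(j\log^{2}b)$-th ancestor of a marked node is itself marked; this allows the prefix comparison of the doubling trick to be answered from the preprocessed data as well, by comparing preprocessed length-$2^{k}\log^{2}b$ paths ending at the marked ancestors of $u'$ and $v'$ that are within $\log^{2}b$ of the natural prefix endpoint.

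Combining the prefix and suffix comparisons yields one of two outcomes: either a mismatch is pinpointed, so that together with the $d$-edge tail from $u',v'$ down to $u,v$ and the known prefix position the full LCE of $P_{1},P_{2}$ is determined; or matches are certified on an initial segment of length at least $\ell-O(\log^{2}b)$, in which case the remaining uncertainty is confined to a path-path query on paths of length at most $\log^{2}b$ that is returned as the reduced query, together with a known additive offset. The hard part is precisely the alignment of the prefix comparison: because the natural prefix endpoint on the aligned paths does not in general fall on a marked node, I must use the periodicity remark to substitute a nearby marked ancestor and carefully bookkeep the resulting $O(\log^{2}b)$ residual so that the final reduced query fits within length $\log^{2}b$ and the total query time remains $O(1)$.
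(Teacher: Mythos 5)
You have the right ingredients (difference cover with $x=\log b$, restricting the Lemma~\ref{lem:simple path-path} levels to marked endpoints so that $O(\log b)$ levels at $O(n/\log b)$ paths each give $O(n)$ space), and the general shape of the query procedure matches the paper. But the step you yourself flag as ``the hard part'' is a genuine gap, not mere bookkeeping. Your proposal slides the whole query up by $d$ first and then runs the doubling trick on the aligned paths; the suffix window of length $2^{k}\log^{2}b$ then conveniently ends at the marked nodes $u',v'$, but the prefix window's bottom endpoint sits at depth $\depth(u')-(\ell-d-2^{k}\log^{2}b)$, which in general is \emph{not} at a marked ancestor of $u'$ (you'd need $\ell-d$ to be a multiple of $\log^{2}b$). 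Substituting the nearest marked ancestor shifts the compared window by some $g<\log^{2}b$, which leaves a top residual segment of length $g$ uncovered. Without an explicit $O(1)$ verification that this residual segment matches, a mismatch found in the rounded window does not pinpoint the true LCE (the first mismatch could be earlier, in the residual), and a match in the rounded window does not certify the residual. Since the lemma's interface forces you to output either a single LCE or a single reduced query, you cannot simply ``bookkeep'' your way past this; you must decide in $O(1)$ time which case you are in.

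The paper resolves this cleanly by reversing the order of operations: it first performs a constant-time name comparison of the length-$\log^{2}b$ prefixes of the \emph{original} length-$\ell$ paths. If they differ, the query is immediately reduced to length $\log^{2}b$. If they match, the remaining length-$(\ell-\log^{2}b)$ suffixes are covered by two overlapping length-$2^{k}\log^{2}b$ windows (the bottom one ending at $u,v$, the top one starting $\log^{2}b$ below the original start), and for whichever window matters, the bottom endpoint is slid up by its own $d\leq\log^{2}b$ to reach a marked node. This slide is safe precisely because, at that point, the $\log^{2}b$ edges directly above the window have been certified to match, so the shift only extends the comparison into verified territory; the slid paths are then preprocessed, and either their first mismatch equals that of the unslid paths, or all that remains is a suffix of length $d\leq\log^{2}b$, which is again a valid reduced query. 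In short: add the explicit up-front length-$\log^{2}b$ prefix check (and offset your top window by $\log^{2}b$ accordingly), and your argument can be made to close; as written it does not.
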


\begin{proof}
We apply Lemma~\ref{lem:difference} with $x=\log b$. Then, for every $k=0,1,\ldots,\log b$ separately, we consider all paths of length
$2^{k}\log^{2} b$ ending at marked nodes. As in the proof of Lemma~\ref{lem:simple path-path}, we lexicographically sort all such paths
and store the longest common extension of every two neighbours on the sorted list augmented with a range minimum query structure. Because
we have only $O(n/\log b)$ marked nodes, the space decreases to $O(n)$. Furthermore, because the length of the paths is of the form
$2^{k}\log^{2}b$ (as opposed to the more natural choice of $2^{k}$), all lists can be constructed in $O(n)$ total time by radix sorting, as
a path of length $2^{k+1}\log^{2}b$ ending at a marked node can be decomposed into two paths of length $2^{k}\log^{b}$ ending
at marked nodes, because if a node is marked, so is its $(x^{2})$-th ancestor.

Consider two paths of the same length $\ell \leq b$ ending at $u\in T$ and $v\in T$. We need to either determine their
LCE, or reduce the query to determining the LCE of two paths of length at most $\log^{2}b$.
If $\ell \leq \log^{2}b$, there is nothing to do.
Otherwise, first we check if the prefixes of length $\log^{2}b$ of both paths are different in $O(1)$ time. If so, we replace the paths
with their prefixes of such length and we are done. Otherwise, if $\ell \leq 2\log^{2}b$ we replace the paths with their suffixes of length
$\ell-\log^{2}b \leq \log^{2}b$ and we are done. The remaining case is that the prefixes of length
$\log^{2}b$ are identical and $\ell > 2\log^{2}b$. In such case, we can calculate $k$ such that
$2^{k}\log^{2}b \leq \ell-\log^{2}b < 2^{k+1}\log^{2}b$. Having such $k$, we cover the suffixes of length $\ell-\log^{2}b$ with two
(potentially overlapping) paths of length exactly $2^{k}\log^{2}b$. More formally, we create two pairs of paths:
\begin{enumerate}
\item $\LA(u,\depth(u)-2^{k}\log^{2}b) \leadsto u$ and $\LA(v,\depth(v)-2^{k}\log^{2}b) \leadsto v$,
\item $\LA(u,\depth(u)-\ell+\log^{2}b) \leadsto \LA(u,\depth(u)-\ell+\log^{2}b+2^{k}\log^{2}b)$ and $\LA(v,\depth(v)-\ell+\log^{2}b) \leadsto \LA(v,\depth(v)-\ell+\log^{2}b+2^{k}\log^{2}b)$.
\end{enumerate}
If the paths from the first pair are different, it is enough to compute their LCE. If they are are identical, it is enough to
compute the LCE of the paths from the second pair. Because we can distinguish between these two cases in $O(1)$ time,
we focus on computing the LCE of two paths of length $2^{k}\log^{2}b$ ending at some $u'$ and $v'$.
The important additional property guaranteed by how we have defined the pairs is that the paths of length $\log^{2}b$ ending at
$\LA(u',\depth(u')-2^{k}\log^{2}b)$ and $\LA(v',\depth(v')-2^{k}\log^{2}b)$ are the same. Now by the properties of the difference cover
we can calculate in $O(1)$ time $d\leq \log^{2}b$ such that the $d$-th ancestors of $u'$ and $v'$ are marked. We conceptually slide both paths up
by $d$, so that they both end at these marked nodes. Because of the additional property, either the paths of length $2^{k}\log^{2}b$ ending at
$\LA(u',\depth('u)-d)$ and $\LA(v',\depth(v')-d)$ are identical, or their first mismatch actually corresponds to the LCE
of the original paths ending at $u'$ and $v'$. These two cases can be distinguished in $O(1)$ time. Then we either use the preprocessed data
to calculate the LCE in $O(1)$ time, or we are left with the suffixes of length $d$ of the paths ending at $u'$ and $v'$.
But because $d\leq\log^{2}b$, also in the latter case we are done.
\qed
\end{proof}

We apply Lemma~\ref{lem:path-path reduction} with $b=n,\log^{2}n,\log^{2}(\log^{2}n),\ldots$ terminating when $b\leq 4$.
The total number of applications is just $O(\log^{*}n)$,
because $\log^{2}(\log^{2}z)=4\log^{2}(\log z)\leq \log z$ for $z$ large enough\footnote{This follows from
$\lim_{z\to \infty}\frac{\log^{2}(\log^{2}z)}{\log z}=\lim_{z\to \infty} \frac{4\log(\log^{2}z)}{\ln z}=\lim_{z\to \infty} \frac{8}{\ln z}=0$.}. Therefore, the total space usage becomes $O(n\log^{*}n)$ and, by iteratively applying the reduction step,
for any two paths of length at most $n$ ending at given $u$ and $v$ we can in $O(\log^{*}n)$ time either compute their LCE,
or reduce the query to computing the LCE of two paths of length $O(1)$, which can be computed naively in additional
$O(1)$ time.

To prove Theorem~\ref{thm:lcepp}, we need to decrease the space usage from $O(n\log^{*}n)$ down to $O(n)$.
To this end, we create a smaller tree $T'$ on $O(n/b)$ nodes, where $b=\log^{*}n$ is the parameter of the difference cover, as follows. 
Every marked node $u\in T$ becomes a node of $T'$. The parent of $u\in T$ in $T'$ is the node corresponding in $T'$ to the $(b^{2})$-th ancestor
of $u$ in $T$, which is always marked. Additionally, we add one artificial node, which serves as the root of the whole $T'$, and make it the parent
of all marked nodes at depth (in $T$) less than $b^{2}$. Now edges of $T'$ correspond to paths of length $b^{2}$ in $T$ (except for the
edges outgoing from the root; we will not be using them). We need to assign unique names to these paths, so that the
names of two paths are equal iff the paths are the same. This can be done by traversing the suffix tree of $T$ in $O(n)$ time.
Finally, $T'$ is preprocessed by applying Lemma~\ref{lem:path-path reduction} $O(\log^{*}n)$ times as described above. Because
its size of $T'$ is just $O(n/b)$, the total space usage preprocessing time is just $O(n)$ now.

To compute the LCE of two paths of length $\ell$ ending at $u\in T$ and $v\in T$, we first compare their
prefixes of length $b^{2}$. If they are identical, by the properties of the difference cover we can calculate $d\leq b^{2}$ such that
the $d$-th ancestors of both $u$ and $v$, denoted $u'$ and $v'$, are marked, hence exist in $T'$. Consequently, if the prefixes of length
$\ell-d$ of the paths are different, we can calculate their first mismatch by computing the first mismatch of the paths of length
$\lfloor (\ell-d)/b^{2} \rfloor$ ending at $u'\in T'$ and $v'\in T'$. This follows because every edge of $T'$ corresponds to a path of length
$b^{2}$ in $T$, so a path of length $\lfloor (\ell-d)/b^{2} \rfloor$ in $T'$ corresponds to a path of length belonging to $[\ell-d-b^{2},\ell-d]$
in $T$, and we have already verified that the first mismatch is outside of the prefix of length $b^{2}$ of the original paths.
Hence the first mismatch of the corresponding paths in $T'$ allows us to narrow down where the first mismatch of the original paths in $T$
occurs up to $b^{2}$ consecutive edges.
All in all, in $O(1)$ time plus a single path-path LCE query in $T'$ we can reduce the original query to a query concerning
two paths of length at most $b^{2}$.

The final step is to show that $T$ can be preprocessed in $O(n)$ time and space, so that the LCE of any two
paths of length at most $b^{2}$ can be calculated in $O(b)$ time. We assign unique names to all paths of length $b$ in $T$, which can be
again done by traversing the suffix tree of $T$ in $O(n)$ time. More precisely, every $u\in T$ such that $\depth(u)\geq b$ stores
a single number, which is the name of the path of length $b$ ending at $u$. To calculate the LCE of
two paths of length at most $b^{2}$ ending at $u\in T$ and $v\in T$, we proceed as follows. We traverse both paths in parallel
top-down moving by $b$ edges at once. Using the preprocessed names, we can check if the first mismatch occurs on these $b$
consecutive edges, and if so terminate. Therefore, after at most $b$ steps we are left with two paths of length at most $b$, such
that computing their LCE allows us to answer the original query. But this can be calculated by
by naively traversing both paths in parallel top-down. The total query time is $O(b)$.

To summarize, the total space and preprocessing time is $O(n)$ and the query time remains $O(\log^{*}n)$, which proves
Theorem~\ref{thm:lcepp}.

\section{Path-Tree LCE}
\label{sec:path-tree}

In this section we prove the following theorem.
\begin{theorem}
For a tree $T$ with $n$ nodes, a data structure of size $O(n)$ can be constructed in $O(n)$ time to answer path-tree LCE queries in
$O((\log\log n)^{2})$ time. 
\label{thm:lcept}
\end{theorem}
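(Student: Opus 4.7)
Since siblings have distinct edge labels, $\LCEPT(v_1,w_1,v_2)$ amounts to locating the endpoint of the unique deterministic walk that starts at $v_2$ and consumes characters of $P=v_1\leadsto w_1$ one at a time, stopping either when $P$ is exhausted or when no child has the required label. My plan is to identify this endpoint via a layered data structure, mirroring the overall architecture of the proof of Theorem~\ref{thm:lcepp}.

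The first layer is a heavy-path decomposition of $T$. The matching walk enters a new heavy path only across a light edge, so it visits at most $O(\log n)$ heavy paths. On each heavy path $H$, the walk continues while successive heavy-edge labels match the current characters of $P$, and its exit position is a path-path LCE between a suffix of $H$ and a suffix of $P$, answerable in $O(\log^{*} n)$ time via Theorem~\ref{thm:lcepp}. Choosing the correct light-child branch (when the walk leaves a heavy path) is handled in $O(1)$ by a perfect hash table on each node's children. Naively chaining these pieces yields $O(\log n\cdot \log^{*} n)$ query time, which must be brought down to $O((\log\log n)^{2})$.

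For the speedup I would apply the tree difference cover of Lemma~\ref{lem:difference} with a parameter $x$ to mark $O(n/x)$ anchor nodes and build a contracted auxiliary tree $T'$ whose edges correspond to length-$x^{2}$ downward subpaths of $T$, each carrying a canonical identifier (obtainable by traversing Breslauer's suffix tree of $T$). A query $\LCEPT(v_1,w_1,v_2)$ then decomposes into (i) a ``head'' match of length at most $x^{2}$ from $v_2$ down to its first marked descendant along the walk, (ii) a single $\LCEPT$ query on $T'$ over the canonically renamed version of the remaining prefix of $P$, and (iii) a ``tail'' match of length at most $x^{2}$ that refines the stopping anchor down to the true endpoint. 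Choosing $x$ so that $T'$ has size roughly $\sqrt{n}$ makes the recursion on $T'$ have depth $O(\log\log n)$; each level contributes $O(\log\log n)$ time for a predecessor-based short-match primitive, so the total query time is $O((\log\log n)^{2})$.

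The main obstacle is exactly this short-$\LCEPT$ primitive used in (i) and (iii): given $v_2$ and a query string of length at most $x^{2}$, presented as a substring of a known path in $T$, find the longest prefix appearing as a downward path from $v_2$ in $O(\log\log n)$ time and $O(n)$ total space. Unlike the path-path setting of Theorem~\ref{thm:lcepp}, the target side here is an entire subtree rather than a fixed path, so I cannot simply sort candidate paths and range-minimize. Instead I would store, for each marked node, a predecessor structure (e.g.\ a $y$-fast trie) over the canonical codes of the length-$x^{2}$ downward paths beginning inside its subtree; short-path codes fit in $O(\log n)$ bits, so predecessor gives the desired $O(\log\log n)$ time. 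Designing this primitive so that the structures for all marked nodes together fit in linear space, and checking that the $O(n)$ space budget survives $O(\log\log n)$ levels of recursion, is the delicate part of the argument.
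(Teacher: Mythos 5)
Your proposal has the right high-level shape (difference covers, a contracted tree, a per-level ``short'' subproblem), but the mechanism that actually makes the running time $O((\log\log n)^2)$ is missing, and you say so yourself: the short path-tree LCE primitive for paths of length at most $x^2$ is the whole ballgame, and your sketch of it does not work. Storing, per marked node, a $y$-fast trie over the \emph{codes} of all length-$x^2$ downward paths ``beginning inside its subtree'' has two fatal problems. First, ``inside its subtree'' makes the total storage $\Theta(n)$ per marked node in the worst case, i.e.\ quadratic overall; you would have to restrict to paths starting \emph{at} the node, as the paper does. Second, and more fundamentally, a predecessor query over path \emph{codes} cannot return an LCE that is shorter than $x^2$: if the code is an arbitrary injective name, nearby codes say nothing about matching prefixes, and even if the code is a lexicographic rank, a predecessor alone does not give you the length of the longest common prefix with the closest candidate. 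So the primitive, as described, can only decide whether the \emph{exact} length-$x^2$ prefix occurs (a perfect-hash test), which is what the paper uses as its easy first step, not as the workhorse.

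What the paper adds, and your proposal lacks, is the following combination for Lemma~\ref{lem:path-tree reduction}: define \emph{canonical paths} of lengths $i\cdot x^2$ for $i=1,\dots,\sqrt{x}$ ending at marked nodes (still only $O(n/\sqrt{x})$ of them), sort all of them on a single global list, store the neighbour-LCE array with a range-minimum structure, and at every node $v$ keep a \emph{local} predecessor structure over the \emph{positions on the global list} of the canonical paths that start at $v$. A query path of length $\ell\le b$ has a prefix of length $\lfloor\ell/x^2\rfloor x^2$ that is itself canonical, so you look up its global position, do one $O(\log\log n)$ predecessor query in $v$'s local structure to get the lexicographically closest canonical paths starting at $v$, and one RMQ to get the LCE. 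This pins down the answer up to an additive $x^2$, which with $x=b^{2/5}$ reduces the remaining pattern length from $b$ to $b^{4/5}$; iterating gives $O(\log\log n)$ rounds of $O(\log\log n)$ each. Note also that the paper's recursion is on the pattern length over the \emph{same} tree, not on shrinking trees; the contracted tree $T'$ of size $O(n/\log\log n)$ is built once, purely to bring the space from $O(n\log\log n)$ down to $O(n)$. Finally, your opening heavy-path-decomposition paragraph is abandoned and plays no role in the eventual argument; you should simply drop it.
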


\noindent The idea is to apply the difference covers recursively with the following lemma.

\begin{lemma}
For a tree $T$ with $n$ nodes and a parameter $b$, a data structure of size $O(n)$ can be constructed in $O(n\log n)$ time, so that given a path of
length $\ell \leq b$ ending at $u\in T$ and a subtree rooted at $v\in T$ we can reduce the query  in $O(\log\log n)$ time so that
the path is of length at most $b^{4/5}$.
\label{lem:path-tree reduction}
\end{lemma}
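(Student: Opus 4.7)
The plan is to apply Lemma~\ref{lem:difference} with $x=b^{2/5}$, so that $x^{2}=b^{4/5}$ and at most $2n/b^{2/5}$ nodes of $T$ become marked. By the Remark, being marked depends only on $\depth(\cdot)\bmod x^{2}$, so along any downward path marked nodes appear periodically every $x^{2}$ steps. I would preprocess three ingredients: (i) a unique name for every length-$x^{2}$ marked-to-marked path in $T$, computed via the suffix tree of $T$ in $O(n)$ time exactly as in the proof of Lemma~\ref{lem:simple path-path}; (ii) at every marked node $m$, a perfect hash table mapping each such outgoing name to the corresponding length-$x^{2}$ marked descendant, totaling $O(n/b^{2/5})$ entries overall; and (iii) at every marked $m$, a deterministic predecessor structure indexing its marked descendants together with the names of the length-$2^{i}x^{2}$ doubling chains leaving $m$. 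The predecessor structures are what force the $O(n\log n)$ construction bound.

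For a query with path $P$ of length $\ell\leq b$ ending at $u$ and subtree rooted at $v$, if $\ell\leq b^{4/5}$ there is nothing to do, so assume $\ell>b^{4/5}$. I first invoke Lemma~\ref{lem:difference} to compute in $O(1)$ an offset $d_{u}\leq x^{2}$ such that $u^{\ast}:=\LA(u,\depth(u)-d_{u})$ is marked; the tail of $P$ below $u^{\ast}$ has length at most $b^{4/5}$ and is set aside. Then I align the subtree side by one predecessor query in $O(\log\log n)$ time: locate the marked descendant $v_{0}$ of $v$ at the admissible depth offset $d_{v}<x^{2}$ for which $v\leadsto v_{0}$ matches the corresponding prefix of $P$ (if no such $v_{0}$ exists the entire answer has length less than $b^{4/5}$ and the reduction is trivial). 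Now both $P$, between $v_{0}$'s aligned depth and $u^{\ast}$, and the matching path from $v_{0}$ are sequences of length-$x^{2}$ marked-to-marked blocks whose names on $P$ I can read in $O(1)$ per block. Using the doubling table from (iii) combined with a single predecessor query over the index of the first mismatching block, I identify in $O(\log\log n)$ total time the deepest matching marked descendant $v'$ of $v_{0}$. The reduced query is $(P',v')$ where $P'$ is the residual suffix of $P$ of length at most $b^{4/5}$ obtained by gluing together the block in which the first mismatch occurs with the deferred tail of length $\leq b^{4/5}$.

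The main obstacle is precisely the last step: a naive block-by-block walk through $\Theta(b^{1/5})$ hash lookups would cost $\Theta(b^{1/5})$, well beyond the $O(\log\log n)$ budget. Overcoming it relies on the fact that from any marked $m$ there is at most one matching length-$2^{i}x^{2}$ chain for any given name sequence, so doubling plus a single predecessor query on ``index of the first mismatching block'' suffices. A secondary issue is that the construction in (iii) is deterministic, which forces the $O(n\log n)$ rather than $O(n)$ preprocessing; as at the end of Section~\ref{sec:path-path}, the eventual proof of Theorem~\ref{thm:lcept} would shave this back to $O(n)$ by running the expensive construction on a contracted tree of size $O(n/\polylog n)$, so that the per-query cost inherits only the $O(\log\log n)$ factor.
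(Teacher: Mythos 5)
Your overall plan is right at the top level: apply Lemma~\ref{lem:difference} with $x=b^{2/5}$ so that marked nodes repeat every $x^{2}=b^{4/5}$ steps, and use the perfect-hash test on length-$b^{4/5}$ names to handle the short case. But two of your core mechanisms do not hold up, and they are exactly where the difficulty of the lemma lies.

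First, alignment. You pick $d_{u}$ for $u$ and $d_{v}$ for $v$ \emph{independently}, placing $u^{*}$ near the bottom of $P$ and $v_{0}$ near the root of $T(v)$. Lemma~\ref{lem:difference} does not support this use: it is applied to a \emph{pair} of nodes and returns a \emph{single} $d$ that marks both $d$-th ancestors. With independent $d_{u},d_{v}$, the value $\ell-d_{u}-d_{v}$ is in general not a multiple of $x^{2}$, so the marked nodes you encounter going down from $v_{0}$ do not face the marked nodes you encounter going down $P$, and your ``sequences of length-$x^{2}$ marked-to-marked blocks'' are not synchronized. What the paper actually does is apply the cover to the \emph{top} of the path, $\LA(u,\depth(u)-\ell)$, and the subtree root $v$, obtaining one $d\leq x^{2}$ that marks both; to make this legal (both nodes must have depth $\geq x^{2}$) and safe (moving \emph{up} by $d$ must stay inside an already-matched region), it first checks that the prefix of length $b^{4/5}$ of $P$ matches some length-$b^{4/5}$ path out of $v$ (via a perfect hash table at $v$) and descends $b^{4/5}$ on both sides, and only then ascends by the common $d$. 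After this, both start nodes are marked, and the remark after Lemma~\ref{lem:difference} guarantees every $x^{2}$-step descendant is marked, which is what gives block alignment.

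Second, the $O(\log\log n)$ step. Your plan stores doubling chains of lengths $2^{i}x^{2}$ per marked node and then does ``a single predecessor query over the index of the first mismatching block.'' That is not a well-defined primitive, and a binary/doubling search for the first mismatch would cost $\Theta(\log(b/x^{2}))=\Theta(\log b)$ hash lookups, which is $\Theta(\log n)$ in general, far above budget. The actual device is different in kind: define \emph{canonical paths} of every length $i\cdot x^{2}$, $i=1,\ldots,\sqrt{x}$, ending at marked nodes (at most $\sqrt{x}$ per marked node, $O(n/\sqrt{x})$ total, so $O(n)$ space); sort them all on a single global list with an RMQ structure over consecutive LCEs; store at the marked endpoint of each canonical path its global rank; and at each node store the canonical paths starting there in a local sorted list augmented with a predecessor structure over the global ranks. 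The query takes the longest canonical prefix of the aligned path, reads its global rank in $O(1)$, and does \emph{one} predecessor query on $v$'s local list to find the lexicographically adjacent canonical paths starting at $v$; the RMQ then gives the common-prefix length $p$ in $O(1)$, and since canonical lengths are multiples of $x^{2}$, the residual path has length $\leq x^{2}=b^{4/5}$. This lex-order-plus-predecessor reduction is the missing idea. Two smaller issues: gluing your ``mismatching block'' ($\leq x^{2}$) with your ``deferred tail'' ($\leq d_{u}\leq x^{2}$) yields a residual of length up to $2b^{4/5}$, not $b^{4/5}$; and the $O(n\log n)$ construction bound is not really forced by deterministic predecessor structures --- the paper even gets construction down to $O(n)$ by swapping true lexicographic order for a radix-sortable order on block names, and the stated bound is simply conservative, later reduced to $O(n)$ via the contracted-tree device you correctly anticipate from the proof of Theorem~\ref{thm:lcept}.
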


\begin{proof}
The first part of the structure is designed so that we can detect in $O(1)$ time if the path-tree LCE is of length at most $b^{4/5}$.
We consider all paths of length exactly $b^{4/5}$ in the tree. We assign names to every such path, so that testing if two paths are identical
can be done by looking at their names. Then, for every node $w$ we gather all paths of length $b^{4/5}$ starting at $w$ (i.e.,
$w\leadsto v$, where $w=\LA(v,\depth(v)-b^{4/5})$) and store their names in a perfect hash table, where every name
is linked to the corresponding node $w$. This allows us to check if the answer is at least $b^{4/5}$ by first looking
up the name of the prefix of length $b^{4/5}$ of the path, and then querying the perfect hash table kept at $v$.
If the name does not occur there, the answer is less than $b^{4/5}$ and we are done. Otherwise, we can move by $b^{4/5}$ down, i.e.,
decrease $\ell$ by $b^{4/5}$ and replace $v$ with its descendant of distance $b^{4/5}$. 
%so that the paths $\LA(u,\depth(u)-\ell-b^{4/5}) \leadsto \LA(u,\depth(u)-\ell)$
%and $\LA(v',\depth(v')-b^{4/5}) \leadsto v'$ are identical.

The second part of the structure is designed to work with the marked nodes. We apply Lemma~\ref{lem:difference} with $x=b^{2/5}$
and consider \emph{canonical paths} of length $i\cdot x^{2}$ in the tree, where $i=1,2,\ldots,\sqrt{x}$, ending at marked nodes. The total
number of such paths is $O(n/\sqrt{x})$, because every marked node is the endpoint of at most $\sqrt{x}$ of them.
We lexicographically sort all canonical paths and store the longest common extension of every two neighbours on the global
sorted list augmented with a range minimum query structure. Also, for every marked node $v$ and every  $i=1,2,\ldots,x$, we save the position of the path
$\LA(v,\depth(v)-i\cdot x^{2}) \leadsto v$ on the global sorted list. 
Additionally, at every node $u$ we gather all canonical paths starting there, i.e., $u \leadsto v$ such that $\LA(v,\depth(v)-i\cdot x^{2})=u$
for some $i=1,2,\ldots,x$, sort them lexicographically and store on the local sorted list of $u$. Every such path is represented by a pair $(u,i)$.
The local sorted list is augmented with a predecessor structure storing the positions on the global sorted list.

Because we have previously decreased $\ell$ and replaced $v$, now by the properties of the difference cover we can find $d\leq x^{2}$
such that the $(\ell+d)$-th ancestor of $u$ and the $d$-th ancestor of $v$ are marked, and then increase $\ell$ by $d$ and replace
$v$ by its $d$-th ancestor. Consequently, from now on we assume that both $\LA(u,\depth(u)-\ell)$ and $v$ are marked.

Now we can use the second part of the structure. If $\ell \leq b^{4/5}$, there is nothing to do. Otherwise, the prefix of
length $\lfloor \ell / x^{2}\rfloor \cdot x^{2}$ of the path is a canonical path (because $\ell \leq \sqrt{x}\cdot x^{2}$),
so we know its position on the global sorted list.
We query the predecessor structure stored at $v$ with that position to get the lexicographical predecessor and successor of the prefix
among all canonical paths starting at $v$. This allows us to calculate the longest common extension $p$ of the prefix and all canonical paths
starting at $v$ by taking the maximum of the longest common extension of the prefix and its predecessor, and the prefix and its successor.
Now, because canonical paths are all paths of the form $i \cdot x^{2}$, the length of the path-tree LCE cannot exceed $p+x^{2}$.
Furthermore, with a level ancestor query we can find $v'$ such that the paths $\LA(u,\depth(u)-\ell) \leadsto \LA(u,\depth(u)-\ell+p)$ and
$v \leadsto v'$ are identical. Then, to answer the original query, it is enough to calculate the path-tree LCE for
$\LA(u,\depth(u)-\ell+p) \leadsto \LA(u,\depth(u)-\ell+\min(\ell,p+x^{2}))$ and the subtree rooted at $v'$.
Therefore, in $O(\log\log n)$ time we can reduce the query so that the path is of length at most $x^{2}=b^{4/5}$ as claimed.

To achieve $O(n)$ construction time, we need to assign names to all paths of length $b^{4/5}$ in the tree, which can be done in
$O(n)$ by traversing the suffix tree of $T$. We would also like to lexicographically sort all canonical paths, but this seems
difficult to achieve in $O(n)$. Therefore, we change the lexicographical order as follows: we assign names to all canonical
paths of length exactly $x^{2}$, so that different paths get different names and identical paths get identical names
(again, this can be done in $O(n)$ time by traversing the suffix tree). Then we treat every canonical path of length $i\cdot x^{2}$
as a sequence consisting of $i$ names, and sort these sequences lexicographically in $O(n)$ time with radix sort.
Even though this is not the lexicographical order, the canonical paths are only used to approximate the answer up to an additive
error of $x^{2}$, and hence such modification is still correct.
\qed
\end{proof}

We apply Lemma~\ref{lem:path-tree reduction} with $b=n,n^{4/5},n^{(4/5)^{2}},\ldots,1$. The total number of applications is
$O(\log\log n)$. Therefore, the total space usage becomes $O(n\log\log n)$, and by applying the reduction step iteratively,
for any path of length $n$ ending at $u$ and a subtree rooted at $v$ we can compute the path-tree LCE in $O((\log\log n)^{2})$ time.
The total construction time is $O(n\log\log n)$.

To prove Theorem~\ref{thm:lcept}, we need to decrease the space usage and the construction time. The idea is similar to the one
from Section~\ref{sec:path-path}: we create a smaller tree $T'$ on $O(n/b)$ nodes, where $b=\log\log n$ is the parameter
of the difference cover. The edges of $T'$ correspond to paths of length $b^{2}$ in $T$. We preprocess $T'$ as described
above, but because its size is now just $O(n/b)$, the preprocessing time and space become $O(n)$.

To compute the path-tree LCE for a given path of length $\ell$ ending at $u$ and a subtree rooted at $v$, we first check if the
answer is at least $b^{2}$. This can be done in $O(\log\log n)$ time by preprocessing all paths of length $b^{2}$ in $T$,
as done inside Lemma~\ref{lem:path-tree reduction} for paths of length $b^{4/5}$. If so, we can decrease $\ell$ and
replace $v$ with its descendant, so that both $\LA(u,\depth(u)-\ell)$ and $v$ are marked, hence exist in $T'$. Then
we use the structure constructed for $T'$ to reduce the query, so that the path is of length at most $b^{2}$. Therefore,
it is enough how to answer a query, where a path is of length at most $b^{2}$, in $O(\log\log n)$ time after $O(n)$ time
and space preprocessing.

The final step is to preprocess $T$ in $O(n)$ time and space, so that the path-tree LCE of a path of length at most $b^{2}$
and any subtree can be computed in $O(b)$ time. We assign unique names to all paths of length $b$ in $T$. Then, for every $u$
we gather the names of all paths $u\leadsto v$ of length $b$ in a perfect hash table. To calculate the path-tree LCE, we traverse
the path top-down while tracing the corresponding node in the subtree. Initially, we move by $b$ edges by using the
perfect hash tables. This allows us to proceed as long as the remaining part of the LCE is at least $b$. Then, we traverse the remaining
part consisting of at most $b$ edges naively. In total, this takes $O(b)$ time. The space is clearly $O(n)$ and the preprocessing
requires constructing the perfect hash tables, which can be done in $O(n)$ time.

\subsection{Lower Bound}

In this section, we prove that any path-tree LCE structure of size $O(n \polylog(n))$ must necessarily
use $\Omega(\log\log n)$ time to answer queries. 
As shown by P{\v{a}}tra{\c{s}}cu and Thorup~\cite{PT2006}, for $U=n^{2}$ any predecessor structure consisting
of $O(n \polylog(n))$ words needs $\Omega(\log\log n)$ time to answer queries, assuming that the word size is $\Theta(\log n)$.
We show the following reduction, which implies the aforementioned lower bound.

\begin{theorem}
For any $\epsilon>0$, given an $\LCEPT$ structure that uses $s(n)=\Omega(n)$ space and answers queries in $q(n)=\Omega(1)$ time we can build
a predecessor structure using $O(s(2U^{\epsilon}+n\log|U|))$ space and $O(q(2U^{\epsilon}+n\log|U|))$ query time for any $S\subseteq [0,U)$ of
size $n$.
\end{theorem}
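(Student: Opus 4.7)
The plan is to construct an explicit tree $T$ on exactly $2U^{\epsilon}+n\log U$ nodes whose $\LCEPT$ structure, together with a small amount of auxiliary data, answers a predecessor query using $O(1/\epsilon)$ $\LCEPT$ calls on $T$. Since $\epsilon>0$ is a fixed constant, this collapses to the claimed $O(q(2U^{\epsilon}+n\log U))$ query time, and the auxiliary tables fit in $O(n\log U)$ extra space, which is absorbed by $s(m)$ because $s(m)=\Omega(m)$.

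Concretely, the tree $T$ has a root $r$ with two children: the bit-level trie $T_S$ of the $n$ elements of $S$ each written as a binary string of length $\log U$, which has at most $n\log U$ nodes; and a complete binary tree $T_Q$ of depth $\epsilon\log U$, which has exactly $2U^{\epsilon}-1$ nodes and contains, for every $\epsilon\log U$-bit string $c$, a unique root-to-leaf path spelling $c$. Given a query $y\in[0,U)$, we split its $\log U$-bit representation into $1/\epsilon$ chunks $y_1,\ldots,y_{1/\epsilon}$ of $\epsilon\log U$ bits each, initialize $c_0$ to the root of $T_S$, and iterate: for $i=1,\ldots,1/\epsilon$, locate by arithmetic the leaf $w_i\in T_Q$ whose root-path spells $y_i$, invoke $\LCEPT(\mathrm{root}(T_Q),w_i,c_{i-1})$, and if the returned LCE endpoint in $T_S$ lies $\epsilon\log U$ edges below $c_{i-1}$, set $c_i$ to this endpoint; otherwise stop.

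At termination, we either have matched $y$ entirely (so $y\in S$ and is its own predecessor) or have identified the deepest node $u\in T_S$ whose root-to-$u$ path is a prefix of $y$'s binary representation, together with the next bit of $y$. To extract the actual predecessor in $O(1)$ additional time, the reduction also stores, for every node of $T_S$, the maximum element in its subtree and a ``left-turn'' pointer to the nearest proper ancestor from which one can descend into a lexicographically smaller sibling subtree; these two $O(n\log U)$-space tables make the standard case split on whether the missing child of $u$ is the $0$-child (predecessor is the subtree-maximum at $u$) or the $1$-child (predecessor is the subtree-maximum at the $0$-sibling of the left-turn ancestor) immediate. Instantiating the resulting predecessor structure with $U=n^{2}$ and any sufficiently small constant $\epsilon$ yields $m=\Theta(n\log n)$, so the P\v{a}tra\c{s}cu--Thorup $\Omega(\log\log n)$ predecessor lower bound for $O(n\polylog n)$-space structures forces $q(m)=\Omega(\log\log n)$, which is the path-tree LCE lower bound announced at the start of this section.

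The main obstacle is that a single $\LCEPT$ call compares one fixed path with a subtree, whereas $y$ ranges over $U$ possible values while $T$ has only $O(U^{\epsilon})$ gadget nodes, so there is no room to pre-store a distinct length-$\log U$ path for every $y$. The fix is to let $T_Q$ spell only one $\epsilon\log U$-bit chunk at a time and chain $1/\epsilon$ calls, relying crucially on the convention---stated earlier in the paper---that each LCE call reports the endpoints of the matched path, so that $c_i$ is produced at no extra cost and the next call can proceed without any auxiliary navigation inside $T_S$.
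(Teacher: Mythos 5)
Your construction and reduction are essentially the same as the paper's: a tree formed by an artificial root with a full binary "pattern gadget" on $U^{\epsilon}$ leaves on one side and the bit-level trie of $S$ on the other, predecessor queries answered by navigating the $S$-trie by $\epsilon\log U$-bit chunks using $\LCEPT$ against the gadget, and the final predecessor extracted from the stopping node $u$ via constant-size per-node tables. The one implementation variation is that you chain $1/\epsilon$ $\LCEPT$ calls to locate $u$, whereas the paper first uses perfect hashing on chunk-boundary prefixes to find the deepest chunk boundary at which the search is still on a trie node and then issues a \emph{single} $\LCEPT$ query; since $\epsilon$ is a fixed constant, both give $O(q(\cdot))$ time, so this difference is immaterial.

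There is, however, one genuine slip in the final case split, which is stated backwards. If the missing child of $u$ is the $0$-child (the next bit of $y$ is $0$ but $u$ has only a $1$-child), then every key below $u$ carries a $1$ at that position while $y$ carries a $0$, so every key in $T(u)$ is \emph{strictly greater} than $y$: the predecessor is not the subtree maximum at $u$ but the subtree maximum of the $0$-sibling of the nearest "left-turn" ancestor. Symmetrically, if the missing child is the $1$-child, every key below $u$ is strictly smaller than $y$, and the predecessor \emph{is} the subtree maximum at $u$. You have these two outcomes interchanged; swapping them restores correctness, and the two $O(n\log U)$-word tables you describe (subtree maxima and left-turn pointers) still suffice for the $O(1)$ extraction.
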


\begin{proof}
We construct a tree $T$ consisting of two parts, which are then glued together by adding an artificial root. One part is simply a full binary trie
on $|U|^{\epsilon}$ leaves (for simplicity, we assume that $|U|^{\epsilon}$ is a power of $2$), each of them corresponding to an element of
$[0,U^{1/\epsilon})$. The other is a binary trie containing all elements of $S$.
% Pawel: draw an example?
More precisely, for every $x\in S$ we consider the binary expansion of $x$, which is of length $\log|U|$, and insert the corresponding path
into the trie. The resulting tree is of size $2|U|^{\epsilon}+n\log |U|$. Then we can find the predecessor of any $x$ in $S$ with $\LCEPT$ queries as
follows. First, observe that a predecessor query can be seen as starting at the root of the binary trie containing the elements of $S$ and
navigating it according to the binary expansion of $x$ as long as possible. There are three cases.

\begin{enumerate}
\item the search ends at a leaf. In such case, $x\in S$.
\item the search ends at a node $u$ such that the next binary digit of $x$ is $0$ but $u$ has no left child. Then the predecessor of $x$ is
the predecessor of the element corresponding to the leftmost leaf in the right subtree of $u$.
\item the search ends at a node $u$ such that the next binary digit of $x$ is $1$ but $u$ has no right child. Then the predecessor of $x$ corresponds
to the rightmost leaf in the left subtree of $u$.
\end{enumerate}

In the first case, we are done. In the second and third case, the answer depends only on the node $u$, so by storing an additional data of size $O(1)$
at every node of $T$ we can locate the predecessor in $O(1)$ time after having found the node $u$.

We split the binary expansion of $x$ into
$1/\epsilon$ chunks $x_{1},x_{2},\ldots,x_{1/\epsilon}$ and process them one-by-one. First, we determine the largest $i$ such that the binary
trie containing all elements of $S$ contains a node $v$ corresponding to the prefix $x_{1}x_{2}\ldots x_{i}$ of the binary expansion of $x$.
This can be done in $O(1)$ time and $O(n\log|U|)$ space using perfect hashing. If $i=1/\epsilon$, $u=v$ and we are done. Otherwise, let
$r$ be the root of the full binary trie and $\ell$ its leaf corresponding to $x_{i+1}$, which can be explicitly stored for every element of
$[0,U^{1/\epsilon})$. Then a single $\LCEPT(r,\ell,v)$ query allows us to determine the node $u$. Overall, the query takes
$O(1/\epsilon+1+q(2U^{\epsilon}+n\log|U|))=O(q(2U^{\epsilon}+n\log|U|))$ time. \qed
\end{proof}

By applying the reduction with $U=n^{2}$ and $\epsilon=1/2$, we get that an $\LCEPT$ structure using $O(n\polylog(n))$ space and answering queries
in $o(\log\log n)$ time implies a predecessor structure using $O(n\polylog(n))$ space and answering queries in $o(\log\log(n))$ time,
which is not possible.

\section{Tree-Tree LCE} 
We now consider the $\LCETT$ problem. We show that the problem is set intersection hard and give a time-space trade-off. 

\subsection{The Set Intersection Reduction}
The \emph{set intersection problem} is defined as follows. Given a family $\mathcal{S} = \{S_1, \ldots, S_k\}$ of sets of total size $n  = \sum_{i=1}^k |S_i|$ the goal is to preprocess $\mathcal{S}$ to answer queries: given two sets $S_i$ and $S_j$ determine if $S_i \cap S_j = \emptyset$. The set intersection problem is widely believed to require superlinear space in order to support fast queries. A folklore conjecture states that for sets of size polylogarithmic in $k$, supporting queries in constant time requires $\tilde \Omega(k^2)$ space~\cite{PR2014} (see also~\cite{CP2010}).

We show the following reduction. 
\begin{theorem}
Let $T$ be a tree with $n$ nodes. Given an $\LCETT$ data structure that uses $s(n)$ space and answers queries in $q(n)$ time we can build a set intersection data structure using $O(s(n))$ space and $O(q(n))$ query time, for input sets containing $O(n)$ elements.
\end{theorem}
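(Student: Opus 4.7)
The plan is to encode each set in the family as a depth-one bouquet hanging off a distinct node of a common root, so that a tree-tree LCE query between two such nodes is nonempty precisely when the corresponding sets intersect.

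More concretely, I would build $T$ as follows. Create an artificial root $r$. For each set $S_i \in \mathcal{S}$, create a child $v_i$ of $r$, connected by an edge whose label is a unique symbol $\sigma_i$ disjoint from the universe of set elements (so the children of $r$ trivially have distinct labels). Then, for each element $x \in S_i$, attach a leaf $\ell_{i,x}$ as a child of $v_i$ via an edge labeled with $x$. Since elements of a single $S_i$ are distinct, the distinct-children-labels assumption is satisfied at every $v_i$, and it is satisfied at $r$ by construction. The total size is $1 + k + \sum_i |S_i| = O(n)$, so building the $\LCETT$ structure on $T$ uses $O(s(n))$ space and $O(n)$ preprocessing.

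To answer a set-intersection query on $S_i, S_j$, I would call $\LCETT(v_i, v_j)$ and check whether the returned LCE is nonempty, which is possible in $O(1)$ additional time from the reported endpoints. Every descendant leaf of $v_i$ is reached by a single edge labeled by some $x \in S_i$, and likewise for $v_j$; hence any path-path LCE from $v_i$ and $v_j$ to descendant leaves has length either $0$ or $1$, and length $1$ is attained iff there exists $x \in S_i \cap S_j$. Thus $S_i \cap S_j \neq \emptyset$ iff $\LCETT(v_i, v_j)$ returns a path of positive length, and the total query time is $O(q(n))$.

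The main thing to verify is that this simple one-level reduction truly respects the $\LCETT$ formalism: paths must descend to leaves, all children of a node must carry distinct labels, and the output must be interpretable as a yes/no intersection answer. All three points are handled by the construction above — the labels $\sigma_i$ separate the groups at $r$, the elements within $S_i$ are distinct so they serve as valid child-edge labels at $v_i$, and each $v_i$'s descendants are literally leaves at distance one. Since set intersection on sets of total size $O(n)$ is widely conjectured to require superlinear space for constant-time queries, the reduction immediately rules out a fast linear-space $\LCETT$ data structure under this conjecture, motivating the time-space trade-off developed in the rest of the section.
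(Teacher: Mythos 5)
Your reduction is essentially identical to the paper's: create a root $r$, a child $v_i$ of $r$ for each set $S_i$, and a leaf under $v_i$ for each element of $S_i$ labeled by that element, so that $\LCETT(v_i,v_j)=1$ iff $S_i\cap S_j\neq\emptyset$. The only cosmetic difference is that you explicitly assign distinct fresh labels $\sigma_i$ to the $r$-to-$v_i$ edges to respect the distinct-sibling-labels assumption, a detail the paper leaves implicit (and which is harmless either way, since those edges never participate in a query).
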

\begin{proof}
Let $\mathcal{S} = \{S_1, \ldots, S_k\}$ be an instance of set intersection with $n = \sum_{i=1}^k |S_i|$. We transform the sets into a tree $T$ with root $r$. For each set $S_i$ create a node $v_i$ as a child of $r$. For each element $e \in S_i$ create a child node of $v_i$ and label the edge by $e$. See Fig.~\ref{fig:setintersection} for an example. To answer an intersect query for $S_i$ and $S_j$ we compute $\ell = \LCETT(v_i, v_j)$. If $\ell = 1$ then $S_i$ and $S_j$ intersect and if $\ell = 0$ they don't. \qed
\end{proof}
\begin{figure}
\begin{center}
\includegraphics[scale=0.6]{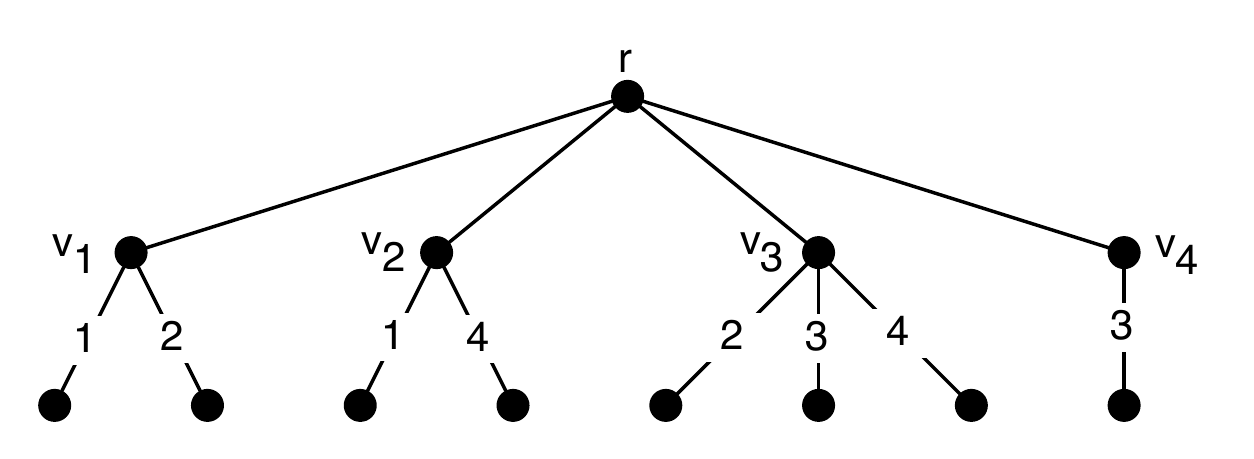}
\end{center}
\caption{Set intersection lower bound example.  $S_1=\{1,2\}, S_2=\{1,4\}, S_3=\{2,3,4\}, S_4=\{3\}$. For example, $\LCETT(v_1, v_3)$=1 since  $S_1 \cap S_3 \neq \emptyset$ but $\LCETT(v_2, v_4)$=0 since  $S_2 \cap S_4 = \emptyset$. }
\label{fig:setintersection}
\end{figure}

\subsection{The Time-Space Trade-Off} 
We now give a time-space trade-off for the $\LCETT$ problem as stated by the following theorem. 

\begin{theorem}\label{thm:lcett}
For a tree $T$ with $n$ nodes and a parameter $\tau$, $1 \leq \tau \leq n$, a data structure of size $O(n\tau)$ can be constructed in $O(n\tau)$ time to answer tree-tree LCE queries in $O(n/\tau)$ time. 
\end{theorem}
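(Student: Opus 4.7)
The plan is a size-based dichotomy driven by the recursion
\[
\LCETT(v_1,v_2)\;=\;1+\max_{c}\LCETT(c_1(c),c_2(c)),
\]
with base value $0$ when no common child label exists; here $c_i(c)$ is the (unique by assumption) child of $v_i$ along the edge labeled $c$. Preprocessing each node's children into a perfect hash table keyed by edge label (linear time and space via~\cite{FKS}) lets each step of this recursion run in $O(1)$, so a joint top-down DFS from $(v_1,v_2)$ that pairs up common-label children and recurses explores the intersection trie of $T(v_1)$ and $T(v_2)$ in time proportional to its size, which is at most $\min(|T(v_1)|,|T(v_2)|)$.

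In the \emph{light regime}, when $\min(|T(v_1)|,|T(v_2)|)\le n/\tau$, we answer the query directly by joint DFS in $O(n/\tau)$ time with no further preprocessing. In the \emph{heavy regime}, when both $|T(v_i)|>n/\tau$, both nodes lie in $B=\{v:|T(v)|>n/\tau\}$, which is upward-closed along descent and hence forms a subtree of $T$. The bottom nodes of $B$, meaning nodes in $B$ all of whose children lie outside $B$, root pairwise disjoint subtrees of size exceeding $n/\tau$, so there are at most $\tau$ of them; call this set $H$. We then precompute $\LCETT(v,h)$ for every $v\in V(T)$ and every $h\in H$ by bottom-up dynamic programming on pairs, processed in reverse BFS-level order using the above recursion, in $O(n\tau)$ space and $O(n\tau)$ construction time. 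To answer a heavy query, we run the joint DFS from $(v_1,v_2)$ but terminate each branch the moment it reaches a pair $(u_1,u_2)$ with $u_1\in H$ or $u_2\in H$, at which point the tabulated $\LCETT(u_1,u_2)$, shifted by the depth already descended from $v_1$ and $v_2$, is folded into a running maximum.

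The main obstacle is proving that the residual joint DFS in the heavy regime visits only $O(n/\tau)$ pairs before every branch is cut off by reaching an $H$-node. The intended bound should come from the thin structure of $B$: it has at most $\tau$ leaves in its own tree structure, hence $O(\tau)$ branching vertices connected by path-like segments, and the intersection trie explored above $H$ must be contained in $B\times B$. A careful amortized charging argument — each explored pair is charged either to one of the $O(\tau)$ branching vertices or to progress along a path in $B$ — should deliver the desired $O(n/\tau)$ residual bound. A complementary subtlety is that the intersection trie may leave $B$ on one side before the other; handling this may require extending $H$ with a ``buffer zone'' of $O(n/\tau)$ descendants of each bottom-$B$ node, which keeps the table size at $O(n\tau)$ and ensures that whichever side first exits $B$ still lands on a tabulated entry.
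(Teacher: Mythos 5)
Your overall architecture — tabulate $\LCETT$ against $O(\tau)$ ``special'' nodes using $O(n\tau)$ space, and answer a query by joint DFS that is cut off at tabulated pairs — matches the paper's, but your choice of special set $H$ does not work, and the gap you flag (``the main obstacle'') is a genuine one, not a technicality. Taking $H$ to be the bottom frontier of the heavy set $B=\{v:|T(v)|>n/\tau\}$ gives at most $\tau$ special nodes, but it does not bound how long the joint DFS can run before reaching one. Concretely, let $T$ be two root-to-leaf paths $P_1,P_2$ of length $L=\Theta(n)$ hanging off a common root, with identical label sequences, and take $\tau=\sqrt n$. Every node on $P_i$ at depth less than $L-n/\tau$ is heavy and has exactly one heavy child, so $B$ restricted to $P_i$ is a path of length $\Theta(n)$ with a single $H$-node at the very bottom. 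Querying $\LCETT$ at the two depth-$1$ nodes makes the joint DFS walk down $\Theta(n)$ pairs $(u_1,u_2)\in B\times B$ — with no branching at all — before either coordinate reaches $H$. The charging argument you sketch (``charge to branching vertices or to progress along a path in $B$'') cannot rescue this: a single non-branching path in $B$ already costs $\Theta(n)$ unrequited charges. The second issue you raise — the DFS exiting $B$ on one coordinate at a non-$H$ node — is also real, and a bounded ``buffer zone'' below $H$ does not address it, since $B$ may have $\Theta(n)$ non-bottom nodes, each with light children where a branch of the DFS can exit $B$ far above $H$.

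What is missing is a decomposition guaranteeing not only few special nodes but also that every node is \emph{close} to one. The paper achieves this with Frederickson-style clustering (Lemma~\ref{lem:clustering}): partition $T$ into $O(\tau)$ connected clusters of size $O(n/\tau)$, each with at most two boundary nodes, and tabulate $\LCETT(v,b)$ for every node $v$ and every boundary node $b$ ($O(n\tau)$ space). A query then DFS's only within the clusters containing $v_1$ and $v_2$; since a cluster has $O(n/\tau)$ nodes, the DFS is forcibly cut off by a boundary node after $O(n/\tau)$ pairs, and the tabulated values finish the job. Your subtree-size threshold gives a tree $B$ that is \emph{thin in leaves} but not \emph{short in paths}, which is exactly why the bound fails; the cluster partition provides the missing diameter/size control. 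If you replace $H$ by the set of cluster boundary nodes (and the ``light regime'' case by ``the DFS stays within one cluster per side''), your argument goes through and coincides with the paper's.
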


First consider the following two extreme solutions. Given nodes $v_1$ and $v_2$ we can simply traverse the entire subtrees $T(v_1)$ and $T(v_2)$ in parallel and report the maximal path-path LCE. Since we only need to store $T$, this solution uses $O(n)$ space and $O(|T(v_1)| + |T(v_2)|) = O(n)$ query time. On the other hand, if we preprocess and store the maximal tree-tree LCE for every pair of nodes we use $O(n^2)$ space and support queries in $O(1)$ time. We show how to efficiently balance between these solutions by clustering $T$ into $O(\tau)$ overlapping subtrees of size $O(n/\tau)$.

\paragraph{Clustering.} 
Let $C$ be a connected subgraph of $T$. A node in $V(C)$ adjacent to a node in $V(T)\backslash V(C)$ is called a \emph{boundary node} of $C$. A \emph{cluster} of $T$ is a connected subgraph of $T$ with at most two boundary nodes and at least $1$ edge. A set of clusters $CS$ is a \emph{cluster partition} of $T$ iff $V(T) = \cup_{C \in CS} V(C)$, $E(T) = \cup_{C\in CS} E(C)$, and for any $C_1, C_2 \in CS$, $E(C_1) \cap E(C_2) = \emptyset$. We will use the following clustering results which follows from Frederickson~\cite{frederickson1997} (see also~\cite{AHLT1997, AHT2000, BG2011}).

\begin{lemma}\label{lem:clustering}
Given a tree $T$ with $n>1$ nodes and a parameter $\tau$, we can construct a cluster partition $CS$ in $O(n)$ time, such that $|CS| = O(\tau)$ and $|V(C)| = O({n/\tau})$ for any $C \in CS$. 
\end{lemma}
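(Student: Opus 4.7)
The plan is to invoke a topology-tree style clustering along the lines of Frederickson~\cite{frederickson1997}. First I would reduce to the case where $T$ has maximum degree $3$: for every node $v$ of degree $d>3$, replace $v$ by a binary caterpillar on $d-1$ dummy nodes, attaching one original neighbour of $v$ to each dummy via a zero-length edge. This inflates the size by at most a constant factor, and any cluster partition of the binarised tree contracts back to a valid cluster partition of $T$ with the same asymptotic size bound and with the number of boundary nodes only ever decreasing (dummies merge back into their original node).

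Next I would build the partition greedily in a single bottom-up sweep. Fix a threshold $m=\ceil{n/\tau}$, root the binarised tree arbitrarily, and process nodes in post-order while maintaining for every node $v$ at most one \emph{open} cluster $C(v)\subseteq T(v)$ whose only unresolved boundary is at the top. At an internal node $v$ with children $u_1,u_2$, attempt to merge $C(u_1), C(u_2)$ together with the edges $(v,u_1),(v,u_2)$ into a single open cluster rooted at $v$. If the merge would push the size past $2m$, or would create a third boundary node, close one of the children's clusters as a finished cluster (with $u_i$ as its top boundary) and keep the other open. Whenever the open cluster at $v$ first reaches size $\geq m$, close it, using $v$ as its top boundary.

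The main step is the analysis of the output. Every cluster closed by the size rule has size in $[m,2m]$, so there are $O(n/m)=O(\tau)$ of them; every cluster closed by the boundary rule is paired with a sibling cluster that immediately becomes part of a larger open cluster, so these can also be charged $O(1)$ to $\Omega(m)$-sized clusters. A final clean-up pass attaches any residual small open cluster to a neighbouring closed cluster through its unique boundary node, which preserves the two-boundary invariant (the residual contributes one boundary that merges with one of the neighbour's two). Each resulting cluster has size $O(n/\tau)$ and at most two boundary nodes, the number of clusters is $O(\tau)$, and the whole sweep runs in $O(n)$ time after the linear-time binarisation.

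The delicate point I would expect to spend the most care on is the clean-up: residuals must be merged without either exceeding the $O(n/\tau)$ size bound or producing a third boundary node at the host cluster. The binarisation step is what makes this clean, since every merge involves at most three locally incident edges and the open/closed bookkeeping guarantees that the "top" of each cluster is unambiguous; with this invariant the case analysis is routine, and the resulting $|CS|=O(\tau)$ and per-cluster size $O(n/\tau)$ bounds follow directly.
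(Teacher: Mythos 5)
The paper does not actually prove Lemma~\ref{lem:clustering}: it states the result and cites Frederickson~\cite{frederickson1997} (with ``see also''~\cite{AHLT1997,AHT2000,BG2011}) as the source, so there is no in-paper proof to compare against line by line. Your proposal is a from-scratch reconstruction of exactly the Frederickson-style topology-tree clustering that those citations refer to: binarise, then sweep bottom-up with a size threshold and a boundary-count rule, then clean up residuals. That is the right toolkit and the overall shape of the argument is sound.

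Two spots deserve more care than your sketch gives them. First, the invariant that an open cluster at $v$ has ``its only unresolved boundary at the top'' does not quite survive the sweep as stated: after closing a child's cluster below, the open cluster at $v$ can legitimately carry a second, already-fixed boundary lower down (e.g.\ a cluster that is a path segment with a subtree hanging off). The standard bookkeeping instead tracks, for each open cluster, whether it currently has one or two boundary nodes, and the boundary rule fires when a merge would give three; your charging argument (boundary-rule closures charged to $\Omega(m)$-sized clusters) should be restated in those terms, since it is the number of ``two-boundary'' open clusters meeting at a node, not the size alone, that forces a closure. Second, after contracting the binarisation back, a cluster consisting solely of dummy (zero-length) edges collapses to a single node with no edges, violating the ``at least $1$ edge'' requirement in the paper's definition of a cluster; this is easy to patch by absorbing any such degenerate cluster into a neighbour through the shared boundary node, but it should be said explicitly. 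Neither issue is a conceptual gap --- both are routine in the topology-tree literature --- but they are precisely the kind of detail the ``routine case analysis'' phrase is hiding.
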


\paragraph{The data structure.}
Our data structure consists of the following parts: 
\begin{itemize} 
\item A cluster partition $CS$ of $T$ with parameter $\tau$.
\item For each pair $(v,b)$, where $v$ is a node in $T$ and $b$ is a boundary node, we store $\LCETT(v,b)$. By Lemma~\ref{lem:clustering}, the total number of boundary nodes is $O(\tau)$ hence this uses $O(n\tau)$ space. 
\end{itemize}

\paragraph{Answering queries.}
Let $v_1$ and $v_2$ be nodes in clusters $C_1$ and $C_2$, respectively. We compute  $\LCETT(v_1, v_2)$ as follows. If $v_1$ or $v_2$ is a boundary node we return the precomputed stored answer in $O(1)$ time. Otherwise, we traverse  in parallel the part of subtree $T(v_1)$ inside $C_1$ and the part of subtree $T(v_2)$ inside $C_2$. If either endpoint of the traversal reaches a boundary node we lookup the precomputed answer. The corresponding $\LCETT$ is then the distance to the endpoint plus the precomputed answer. The answer to $\LCETT(v_1, v_2)$ is the maximal path-path LCE found during the traversal.

Since $C_1$ and $C_2$ contain $O(n/\tau)$ nodes the total time is $O(n/\tau)$. Hence, our solution uses $O(n\tau)$ space and $O(n/\tau)$ query time, thus completing the proof of Theorem~\ref{thm:lcett}.

\medskip 

\noindent Finally, we note that the above data structure can be easily modified (while maintaining the same time and space bounds) to support the following tree-tree LCE query:
Given nodes $v_1$ and $v_2$, find the largest common subtree (rather than subpath) starting at $v_1$ and $v_2$ (i.e., the largest connected subgraph that includes $v_1$ and  descendants of $v_1$ and is equal to a connected subgraph that includes $v_2$ and  descendants of $v_2$).

\bibliographystyle{abbrv}
\bibliography{paper}

\end{document}